\newcommand{\triangularGrid}[4]{
    \newcommand\rowNum{#1} 
    \newcommand\colNum{#2} 
    \newcommand\triSide{#3} 
    \newcommand\triHeight{#4} 

    \clip (-.5*\triSide,-1) rectangle (\triSide+\triSide*\colNum -.5*\triSide,4*\triHeight*\rowNum) ;

    \foreach \i in {0,...,\rowNum} {
        \ifnum\i=0 
            \foreach \j in {0,...,\colNum} {
                \draw[gray!50] (0:\j*\triSide) -- ++ (-120:\triSide) -- ++ (0:\triSide) --++ (120:\triSide) --++ (0:\triSide) ;
            }
            \foreach \j in {0,...,\colNum} {
                \draw[gray!50] (0:\j*\triSide-\triSide) --++ (0:\triSide) --++ (120:\triSide) --++ (0:\triSide) --++ (-120:\triSide);
            }
        \else
            \foreach \j in {0,...,\colNum} {
                \newcommand\startingPoint{2*\i*\triHeight}
                \draw[gray!50] (90:\startingPoint) ++ (0:\j*\triSide) -- ++ (-120:\triSide) -- ++ (0:\triSide) --++ (120:\triSide) --++ (0:\triSide) ;
            }
            \foreach \j in {0,...,\colNum} {
                \newcommand\startingPoint{2*\i*\triHeight}
                \draw[gray!50] (90:\startingPoint) ++ (0:\j*\triSide-\triSide) --++ (0:\triSide) --++ (120:\triSide) --++ (0:\triSide) --++ (-120:\triSide);
            }
        \fi
    }
}
\begin{document}

\title{Deterministic Leader Election for Stationary Programmable Matter with Common Direction\thanks{This work has been partially supported by ANR project DUCAT (ANR-20-CE48-0006)}} 

\titlerunning{Deterministic Leader Election with Common Direction} 

\author{Jérémie Chalopin \and
Shantanu Das \and
Maria Kokkou }
\authorrunning{J. Chalopin et al.}
%
\institute{
Aix Marseille Univ, CNRS, LIS, Marseille, France\\
\email{\{jeremie.chalopin,shantanu.das,maria.kokkou\}@lis-lab.fr}}

\maketitle

\begin{abstract}
Leader Election is an important primitive for programmable matter, since it is often an intermediate step for the solution of more complex problems. Although the leader election problem itself is well studied even in the specific context of programmable matter systems, research on fault tolerant approaches is more limited. We consider the problem in the previously studied Amoebot model on a triangular grid, when the configuration is connected but contains nodes the particles cannot move to (e.g., obstacles). We assume that particles agree on a common direction (i.e., the horizontal axis) but do not have chirality (i.e., they do not agree on the other two directions of the triangular grid). We begin by showing that an election algorithm with explicit termination is not possible in this case, but we provide an implicitly terminating algorithm that elects a unique leader without requiring any movement. These results are in contrast to those in the more common model with chirality but no agreement on directions, where explicit termination is always possible but the number of elected leaders depends on the symmetry of the initial configuration. Solving the problem under the assumption of one common direction allows for a unique leader to be elected in a stationary and deterministic way, which until now was only possible for simply connected configurations under a sequential scheduler.
\end{abstract}

\keywords{programmable matter, leader election, distributed algorithms, Amoebot, stationary, common direction}

\section{Introduction}
Programmable Matter (PM) refers to a distributed system consisting of a large number of constant--memory computational entities (called \textit{particles}) evolving in a geometric environment and acting collaboratively in order to accomplish a given task. We consider the task to be the well--known Leader Election (LE) problem, that is, electing a unique leader among the particles. Particles in PM systems do not have unique identifiers or a global sense of direction, making it difficult to break the symmetry and elect a unique leader. In some contexts, particles may be unable to move. For example, this can be so as to maintain a configuration or due to the presence of obstacles (such as foreign objects) in the system. Hence, we study the case of particles electing a leader without any movement, proposing a so-called \textit{stationary} algorithm. Leader Election without movement has been studied in the literature in the context of PM (e.g., \cite{bazzi2019stationary,di2020shape}), however, it is not always possible to deterministically elect a unique leader. As a simple example consider a synchronous system of three particles forming a triangle in a triangular grid. Without additional assumptions, there is no deterministic algorithm that elects a unique leader. Moreover, even if the particles have chirality (i.e., a common notion of clockwise and counterclockwise directions) a unique leader cannot be obtained for every initial configuration by existing algorithms, such as \cite{bazzi2019stationary,dufoulon2021efficient}. Providing a method to elect a unique leader can be useful for problems where a task cannot be performed by multiple leaders but a single leader would make the problem solvable. For example, in \cite{di2020shape} it is proven that $k$ leaders in a $k$--symmetric initial configuration of particles cannot perform shape formation without additional capabilities, unless the target shape is also $k$--symmetric. We show that adding agreement on one direction instead of chirality suffices to elect a unique leader. We give an implicitly terminating algorithm where a unique particle eventually switches its state to leader and no particle ever changes state again, but particles never know whether the algorithm has terminated. We prove that explicit termination is not possible for particles agreeing on one direction without any additional capabilities. However, complimentary to the case of particles with common chirality, a unique leader is eventually elected in our setting even under an asynchronous but fair scheduler.

\subsection{Related Work and Motivation} \label{sec:related-work-motivation} 
The concept of PM was introduced and formalized in \cite{toffoli1991programmable}. Since then, various very distinct models such as \cite{AM-derakhshandeh2014amoebot,hawkes2010programmable,bourgeois2016programmable,woods2013active,fekete2021cadbots,feldmann2022coordinating} have been proposed. We use the Amoebot model (Section \ref{sec:model-preliminaries}), introduced in \cite{AM-derakhshandeh2014amoebot} and updated in \cite{daymude2023canonical}. Leader Election has been studied in both a two dimensional setting, as we describe below, and in three dimensions in \cite{gastineau2022leader,briones2023asynchronous}. All LE algorithms for deterministic 2D settings, including the one in this paper, assume that particles operate in a geometric environment (i.e., triangular grid), have constant memory and are activated by a fair scheduler. However, each of the existing algorithms uses at least two additional assumptions or capabilities. We list the different options for those model choices here and underline the ones used in our work: 
\begin{itemize}
    \item Using a randomized algorithm to break symmetries (e.g., \cite{daymude2017improved}) or a \underline{deterministic} algorithm.
    \item  Electing $k \in \{1,2,3,6\}$ leaders in $k$-symmetric configurations (e.g., \cite{di2020shape}) or a \uline{unique leader} regardless of the configuration.
    \item A simply connected configuration (e.g., \cite{gastineau2019distributed}) or a configuration containing \underline{holes}.
    \item Particles with (e.g., \cite{bazzi2019stationary}) or \underline{without common chirality} which is a common sense of rotational orientation.
    \item Particles that can move from a node to a neighbouring one (e.g., \cite{emek2019deterministic}) or \uline{stationary particles}.
    \item A sequential scheduler (e.g., \cite{dufoulon2021efficient}) which activates one particle at a time and the particle finishes its action or an \underline{asynchronous scheduler} which simultaneously activates any number of particles. It is worth noting that using the concurrency control framework presented in \cite{daymude2023canonical}, algorithms which terminate under a sequential scheduler and satisfy certain conditions can be transformed into equivalent algorithms that work in the concurrent setting using \textit{locks}. However, even if we remove the sequential scheduler assumption in previous stationary LE algorithms given the results of \cite{daymude2023canonical}, current stationary algorithms cannot elect a unique leader without additional capabilities.
\end{itemize}
We summarise the results and assumptions of previous work in Table \ref{tab:previous-LE}. 
\begin{table}[h]
    \begin{center}
    \resizebox{\columnwidth}{!}{%
    \begin{tabular}{c|c|c|c|c|c|c|c}
        \toprule
        Paper & Leaders & Simply Connected Particles & Chirality & Movement & Seq. Scheduler & One Direction & Dimension \\ \toprule
        \cite{gastineau2019distributed} & 1 & \checkmark & \checkmark & \texttt{X} & \checkmark & \texttt{X} & 2D \\ 
        \cite{emek2019deterministic} & 1 & \texttt{X} & \texttt{X} & \checkmark & \checkmark & \texttt{X} & 2D \\
        \cite{dufoulon2021efficient} & 1 & \texttt{X} & \checkmark & \checkmark & \checkmark & \texttt{X} & 2D \\
        \cite{di2020shape} & 3 & \checkmark & \texttt{X} & \texttt{X} & \texttt{X} & \texttt{X} & 2D \\ 
        \cite{dufoulon2021efficient} & 6 & \texttt{X} & \checkmark & \texttt{X} & \checkmark & \texttt{X} & 2D \\
        \cite{bazzi2019stationary} & 6 & \texttt{X} & \checkmark & \texttt{X} & \texttt{X} & \texttt{X} & 2D \\  
        \cite{briones2023asynchronous} & 1 & \checkmark & \texttt{X} & \texttt{X} & \checkmark & \texttt{X} & 2D \& 3D\\
        \midrule
        \textbf{This Work} & 1 & \texttt{X} & \texttt{X} & \texttt{X} & \texttt{X} & \checkmark & 2D \\
        \bottomrule
    \end{tabular}
    }
    \end{center}
    \caption{ Deterministic LE in two-dimensions. In all papers papers particles operate in a triangular grid (see Section \ref{sec:model-preliminaries}). ``Simply Connected Particles'' refers to the assumption that the particle system does not have \textit{Holes}. ``Chirality'' is a common sense of rotational orientation. ``Movement'' is the ability of particles to move from a node to a neighbouring one. A ``Sequential Scheduler'' activates one particle at a time. ``One Direction'' is a common sense of orientation on one direction on the grid.}
    \label{tab:previous-LE} 
\end{table}

We are interested in determining the minimum capabilities that particles need to deterministically elect a \textit{unique} leader in a 2D system. To the best of our knowledge the case of particles agreeing on one direction instead of having chirality or movement capabilities has not been studied before. Therefore, an additional motivation was to determine the differences between assuming chirality and assuming common direction. We show that the difficulties that arise from each of the assumptions are different and the results that can be obtained are complementary to each other. In previous work like \cite{bazzi2019stationary} it was shown that assuming chirality we can get an explicitly terminating algorithm, but up to six leaders are elected depending on the symmetry of the configuration. We show here that agreement on one direction allows for a single leader to be elected, however, an algorithm with explicit termination is no longer possible (Section \ref{sec:termination-impossibility}). Hence, perhaps surprisingly, the assumptions of common chirality and common direction are not directly comparable with respect to which one is more general. 

From a practical perspective, we find scenarios where a large number of particles can autonomously detect the intended common direction more natural, whereas, agreement on chirality is potentially a less intuitive capability to implement. For example, particles can be placed at a slightly tilted plane in order to collectively agree on a common direction, similarly to the setting in \cite{becker2014reconfiguring}. Alternatively, a large enough source of light placed at a sufficiently long distance can provide a common direction for light detecting particles. A setting with some similarities to this case was described in \cite{savoie2018phototactic}.

\subsection{Our Contributions} \label{sec:contributions}
We provide the first result on deterministically electing a unique leader without using movement or chirality, assuming that the particles agree on one direction and that the initial configuration is connected. We show that explicitly terminating LE is not possible for this case (Section \ref{sec:termination-impossibility}) but we give an algorithm that elects a unique leader in $O(n^3)$ rounds, where $n$ is the number of particles in the system (Section \ref{sec:algorithm-description}). Our results are complementary to the case of particles only agreeing on chirality, such as in \cite{bazzi2019stationary}.

Our algorithm (Section \ref{sec:algorithm-description}) uses a message passing procedure to elect intermediate leaders on boundaries. In Section \ref{sec:messages-on-boundaries} we show that if a pair of particles on a boundary can \textit{detect} each other's chirality, they can differentiate between the boundaries in which they participate. We then give a message passing procedure for such particles, so that a message originating at a particle on some boundary $B$ is only received by particles also on $B$. Once intermediate leaders are elected, the system is encoded as a set of trees, each rooted at one of the leaders, and the trees are compared and merged until only one root remains. Encoding parts of a network as trees in order to compare them, is a technique that is used in both mobile agent computing (e.g., \cite{das2006effective}) and in PM systems (e.g., \cite{di2020shape}). However, here we mark the endpoints of the edges where a comparison occurs in the encoding of the tree, adding a dynamic aspect to the system encoding.  

\section{Model and Preliminaries} \label{sec:model-preliminaries}
Let $G_{\Delta}$ be an infinite regular triangular grid. We use two coordinates, $x$ and $y$, to describe the relative position of nodes in the grid. Suppose $v_1 = (x_1, y_1)$ and $v_2$ are adjacent nodes of $G_{\Delta}$. If $v_2$ is on the East (resp. West) of $v_1$, $v_2 = (x_1 + 2, y_1)$ (resp. $v_2 = (x_1 - 2, y_1)$). If $v_2$ is on the North East (resp. South West) of $v_1$, $v_2 = (x_1 + 1, y_1 + 1)$ (resp. $v_2 = (x_1 -1, y_1 - 1)$). Finally, if $v_2$ is on the North West (resp. South East) of $v_1$, $v_2 = (x_1 - 1, y_1 + 1)$ (resp. $v_2 = (x_1 + 1, y_1 - 1)$). The particle system consists of a finite, connected subset of $G_\Delta$, such that each of the nodes in this subset contains exactly one particle. We refer to nodes occupied by particles as \textit{particle nodes} and to nodes not occupied by particles as \textit{non--particle nodes}. Each particle has six ports associated with a \textit{port number} in $\{0, \ldots, 5\}$, corresponding to each edge leading to a neighbouring node. We define ports 0 (East) and 3 (West) to be common for all particles. The remaining ports are labelled in a circular way such that port $i$ and port $i + 1 \mod 6$ lead to neighbouring nodes and the local labelling of the ports is known to the particle. We refer to ports with a port number in $\{0,1,5\}$ (resp. $\{2,3,4\}$) as \textit{right} (resp. \textit{left}) ports. For directions that are not 0 or 3, the port numbers are not consistent, even among neighbours. The particles have constant memory and they cannot move, but they can communicate with particles at distance one by exchanging messages. A message, $m$, sent by an active particle $p$ to a neighbouring particle $p'$ is received when $p'$ is activated and both $p$ and $p'$ know the ports $m$ is sent and received through. We assume that particles are activated by a \textit{fair asynchronous} scheduler so that every particle is activated infinitely often and it is possible but not necessary that all particles are activated simultaneously. 

We call an edge $uv$ a \textit{local boundary} if there exists a non--particle node $o$ that is a common neighbour of $u$ and $v$ and we write $(uv, o)$ to denote the local boundary. A sequence of local boundaries $(u_1u_2, o_1), (u_2u_3, o_2), \ldots, (u_ku_1, o_k)$ is a \textit{boundary} when moving around $u_i$ from $u_{i-1}$ to $u_{i+1}$ only non--particle nodes starting with $o_{i-1}$ and ending with $o_i$ are encountered. Notice that it is possible that $o_{i-1} = o_i$. For any $(uv,o)$ we say that $u$ and $v$ are neighbours on the local boundary. We call particles that are on at least one boundary and are connected to each other through their 0 and 3 ports, a \textit{horizontal path}. An edge between two particles, $p_a$ and $p_b$, on a horizontal path in which $p_a$ and $p_b$ do not have a common neighbouring particle, is called a \textit{dark blue edge} (DBE). An edge between $p_a$ and $p_b$ in a horizontal path where $p_a$ and $p_b$ have one common neighbouring particle is called a \textit{light blue edge} (LBE). All other edges are \textit{grey edges}. A subsystem of particles connected by grey edges and LBEs but not by DBEs, is called a \emph{grey component}. An example of the different kind of edges is shown in Figure \ref{fig:example-all-edges}.
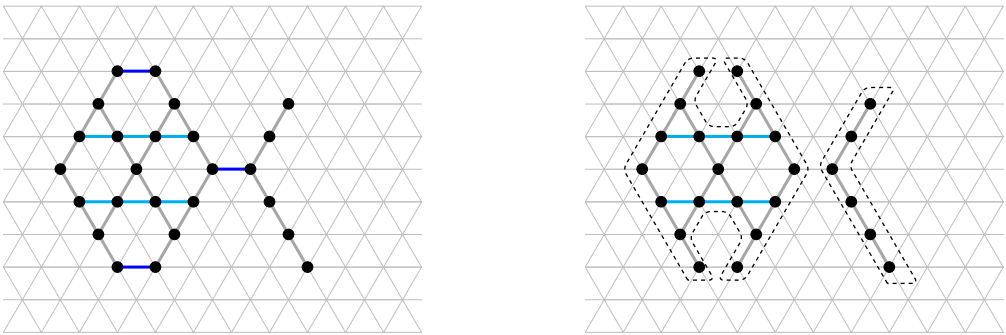
\begin{figure}[h]  
    \begin{minipage}{.49\textwidth}
    \centering
    \vspace{-2.6cm}
    \scalebox{.5}{
    \begin{tikzpicture}
        \triangularGrid{4}{10}{1}{0.865}
        
        \draw[line width=.8mm, gray!70] (0:3) ++ (60:1) --++ (60:3) --++ (120:3) ++ (180:1) --++ (-120:3) --++ (-60:3) ++ (-60:1) ++ (120:2) --++ (60:4) ++ (180:2) --++ (-60:4);

        \draw[line width=.8mm, gray!70] (0:8) ++ (120:1) --++ (120:3) --++(60:2) ;

        \filldraw[line width=.8mm, cyan] (0:2) ++ (60:1) ++ (0:1) ++ (60:1) ++ (180:2) ++ (120:1) --++ (0:1) --++ (0:1) --++ (0:1) ++ (60:1) ++ (180:2) ++ (180:2) ++ (60:1) --++ (0:1) --++ (0:1) --++ (0:1) ++ (120:1) ++ (180:2) ++ (60:1) ++ (0:1) ++ (120:1) ;

        \filldraw[line width=.8mm, blue] (0:2) ++ (60:1) --++ (0:1) ++ (60:3) ++ (120:3) --++ (180:1) ;

        \filldraw[line width=.8mm, blue] (0:8) ++ (120:4) --++ (180:1) ;

        \filldraw[black] (0:3) ++ (120:1) circle(4pt) ++ (0:1) circle(4pt) ++ (60:1) circle(4pt) ++ (180:2) circle(4pt) ++ (120:1) circle(4pt) ++ (0:1) circle(4pt) ++ (0:1) circle(4pt) ++ (0:1) circle(4pt) ++ (60:1) circle(4pt) ++ (180:2) circle(4pt) ++ (180:2) circle(4pt) ++ (60:1) circle(4pt) ++ (0:1) circle(4pt) ++ (0:1) circle(4pt) ++ (0:1) circle(4pt) ++ (120:1) circle(4pt) ++ (180:2) circle(4pt) ++ (60:1) circle(4pt) ++ (0:1) circle(4pt) ;

        \filldraw[black] (0:8) ++ (120:1) circle(4pt) ++ (120:1) circle(4pt) ++ (120:1) circle(4pt) ++ (120:1) circle(4pt) ++ (60:1) circle(4pt) ++ (60:1) circle(4pt) ;
        
    \end{tikzpicture}
    }
    \end{minipage} \hfill
    \begin{minipage}{.49\textwidth}
    \centering
    \vspace{-2.6cm}
    \scalebox{.5}{
    \begin{tikzpicture}
        \triangularGrid{4}{10}{1}{0.865}
        
        \draw[line width=.8mm, gray!70] (0:3) ++ (60:1) --++ (60:3) --++ (120:3) ++ (180:1) --++ (-120:3) --++ (-60:3) ++ (-60:1) ++ (120:2) --++ (60:4) ++ (180:2) --++ (-60:4);

        \draw[line width=.8mm, gray!70] (0:8) ++ (120:1) --++ (120:3) --++(60:2) ;

        \filldraw[line width=.8mm, cyan] (0:2) ++ (60:1) ++ (0:1) ++ (60:1) ++ (180:2) ++ (120:1) --++ (0:1) --++ (0:1) --++ (0:1) ++ (60:1) ++ (180:2) ++ (180:2) ++ (60:1) --++ (0:1) --++ (0:1) --++ (0:1) ++ (120:1) ++ (180:2) ++ (60:1) ++ (0:1) ++ (120:1) ;

        \filldraw[black] (0:3) ++ (120:1) circle(4pt) ++ (0:1) circle(4pt) ++ (60:1) circle(4pt) ++ (180:2) circle(4pt) ++ (120:1) circle(4pt) ++ (0:1) circle(4pt) ++ (0:1) circle(4pt) ++ (0:1) circle(4pt) ++ (60:1) circle(4pt) ++ (180:2) circle(4pt) ++ (180:2) circle(4pt) ++ (60:1) circle(4pt) ++ (0:1) circle(4pt) ++ (0:1) circle(4pt) ++ (0:1) circle(4pt) ++ (120:1) circle(4pt) ++ (180:2) circle(4pt) ++ (60:1) circle(4pt) ++ (0:1) circle(4pt) ;

        \filldraw[black] (0:8) ++ (120:1) circle(4pt) ++ (120:1) circle(4pt) ++ (120:1) circle(4pt) ++ (120:1) circle(4pt) ++ (60:1) circle(4pt) ++ (60:1) circle(4pt) ;

        \draw[rounded corners,dashed,line width=1pt] (0:2.5) ++ (120:.6) --++ (120:3.4) --++ (60:3.4) --++ (0:.8) --++ (-120:1.3) --++ (-60:.8) --++ (0:.7) --++ (60:.7) --++ (120:1.4) --++ (0:.6) --++ (-60:3.4) --++ (-120:3.4) --++ (180:.7) --++ (60:1.3) --++ (120:.8) --++ (180:.6) --++ (-120:.8) --++ (-60:1.3) --++ (180:.7) ;

        \draw[rounded corners,dashed,line width=1pt] (0:7.7) ++ (120:.5) --++ (120:3.6) --++ (60:2.4) --++ (0:.8) --++ (-120:2.4) --++ (-60:3.6) --++ (180:.8); 
    \end{tikzpicture}
    }
    \end{minipage}
    \caption{Example of grey, dark blue and light blue edges in two grey components connected by a DBE. The two grey components without any of the DBEs are traced in the second subfigure.}
    \label{fig:example-all-edges}
\end{figure}

We make the following observations on whether neighbouring particles can detect having common chirality.

\begin{note} \label{obs:grey-edges-orientation-detection}
    Any pair of particles, $\{p, p'\}$, connected by a grey edge can detect whether they have the same chirality. 
\end{note}

 This is done by each particle looking at the port leading to the edge connecting them. If the grey edge that is incident to port $i \in \{1,2,4,5\}$ of $p$ and incident to port $i' \in \{1,2,4,5\}$ of $p'$ satisfies $i + 3 \mod 6 = i'$ and $i' + 3 \mod 6 = i$ (resp. $i + 3 \mod 6 \neq i'$ and $i' + 3 \mod 6 \neq i$) the particles know they agree (resp. do not agree) on chirality.

\begin{note} \label{obs:light-blue-edges-orientation-detection}
    Any pair of particles, $\{p, p'\}$, connected by an LBE can detect whether they have the same chirality.
\end{note}

Each of $p$ and $p'$ is connected by a grey edge to a common occupied neighbour, $q$. As in Note \ref{obs:grey-edges-orientation-detection}, $\{p,q\}$ (resp. $\{p',q\}$) know whether they have common chirality. Therefore, $q$ can also inform $p$ (resp. $p'$) whether it has the same chirality as $p'$ (resp. $p$).

\begin{note} \label{obs:dark-blue-edges-orientation-detection}
    Any pair of particles, $\{p, p'\}$, connected by a DBE cannot locally detect whether they have the same chirality.
\end{note}

Contrary to Note \ref{obs:grey-edges-orientation-detection}, it is always the case that $i + 3 \mod 6 = i'$ and $i' + 3 \mod 6 = i$, so this cannot be used to detect chirality. Contrary to Note \ref{obs:light-blue-edges-orientation-detection}, $p$ and $p'$ do not have common neighours and by the problem definition they cannot move to communicate information like in \cite{emek2019deterministic,di2020shape}. So $p$ and $p'$ cannot locally detect whether they have common chirality.

 The difference in whether particles on horizontal edges can detect having common chirality (Notes \ref{obs:light-blue-edges-orientation-detection} and \ref{obs:dark-blue-edges-orientation-detection}) is why we split horizontal edges into DBEs and LBEs. 

\subsection*{Pseudocode Encoding}
Each time a particle is activated, it performs a (possibly empty) set of actions based on its current state, evaluates the messages it receives and based on its current state and messages it performs another (possibly empty) set of actions and either transitions to a new state or remains in its current state.  The pseudocode of this paper is intended to give a high level description of the detailed algorithm presented in Section \ref{sec:algorithm-description}. In particular we omit all variable updates in the pseudocode and only include the states, transition conditions and actions taken by the particles. Each procedure is a set of states and each state consists of a possibly empty set of actions and a function. The function is of the form \textsc{FunctionName} ( $c_1$: $a_1, \ldots, a_{k}$, \texttt{state$_1$} ; $\ldots$ ; $c_m$: $a_1, \ldots, a_{k'}$, \texttt{state$_m$} ). Each $c_i$ corresponds to a condition and each condition leads to a state. Conditions are evaluated in order and when a condition becomes true, the actions after ``\texttt{:}'' are executed and the particle moves to the new state without evaluating the remaining conditions. If a state is not specified, it is implied that the particle remains in the same state. If no condition is true, the particle remains in the same state until a condition is satisfied. For ease of presentation, the conditions and their corresponding actions and state changes are separated by ``\texttt{;}''. Consecutive actions are separated by commas. At the beginning of each procedure, we list the states included in that procedure and we mark the states a particle may be in when entering the procedure in bold.

\section{Explicit Termination Impossibility} \label{sec:termination-impossibility}
We prove that an explicitly terminating LE algorithm is not possible without imposing any restrictions on the system, using the same basic indistinguishability argument as in \cite{angluin1980local}. Although we assume a synchronous scheduler, the same result holds for sequential activations by activating all equivalent particles (as defined in Figure \ref{fig:s1-and-s2}) sequentially instead of simultaneously. Our proof only addresses the deterministic case, however, it can be shown that no probabilistic algorithm solves the problem by applying the method of \cite{itai1981symmetry}. 

\begin{figure}[th]
    \centering
    \vspace{-4cm}
    \begin{subfigure}[b]{0.49\textwidth}
        \centering
        \begin{tikzpicture}[scale=.5]
            \triangularGrid{5}{4}{1}{0.865}

            \draw[line width=.7mm, blue] (60:1) ++ (0:1) ++ (90:2*.865) --++ (0:1) ++ (60:1) ++ (120:1) --++ (180:1) ++ (-120:1) ;

            \draw[line width=.7mm, gray!70] (60:1) ++ (0:1) ++ (90:2*.865) ++ (0:1) --++ (60:1) --++ (120:1) ++ (180:1) --++ (-120:1) --++ (-60:1) ;
            
            \filldraw (60:1) ++ (0:1) ++ (90:2*.865) circle(5pt) ++ (0:1) circle(5pt) ++ (60:1) circle(5pt) ++ (120:1) circle(5pt) ++ (180:1) circle(5pt) ++ (-120:1) circle(5pt) ;

            \node[scale=.8] (A) at (1.5,2.3*.865) {a};
            \node[scale=.8] (A) at (2.5,2.3*.865) {b};

            \node[scale=.8] (A) at (.5,4*.865)  {f};
            \node[scale=.8] (A) at (3.5,4*.865) {c};

            \node[scale=.8] (A) at (1.5,5.6*.865) {e};
            \node[scale=.8] (A) at (2.5,5.6*.885) {d};
        \end{tikzpicture}
        \caption{\centering $S_1$}
        \label{fig:s1}
    \end{subfigure} \hfill
    \begin{subfigure}[b]{0.49\textwidth}
        \centering
        \begin{tikzpicture}[scale=.5]
            \triangularGrid{5}{4}{1}{0.865}
            \node[] at (2,0) (a) {\huge $\vdots$} ;

            \draw[line width=.7mm, blue] (60:1) ++ (0:1) --++ (0:1) ++ (60:1) ++ (120:1) --++ (180:1) ++ (120:1) ++ (60:1) --++ (0:1) ++ (60:1) ++ (120:1) --++ (180:1) ++ (120:1) ++ (60:1) --++ (0:1) ;

            \draw[line width=.7mm, gray!70] (60:1) ++ (0:1) ++ (0:1) --++ (60:1) --++ (120:1) ++ (180:1) --++ (120:1) --++ (60:1) ++ (0:1) --++ (60:1) --++ (120:1) ++ (180:1) --++ (120:1) --++ (60:1) ++ (0:1) ;
            
            \filldraw (60:1) ++ (0:1) circle(5pt) ++ (0:1) circle(5pt) ++ (60:1) circle(5pt) ++ (120:1) circle(5pt) ++ (180:1) circle(5pt) ++ (120:1) circle(5pt) ++ (60:1) circle(5pt) ++ (0:1) circle(5pt) ++ (60:1) circle(5pt) ++ (120:1) circle(5pt) ++ (180:1) circle(5pt) ++ (120:1) circle(5pt) ++ (60:1) circle(5pt) ++ (0:1) circle(5pt) ; 
            
            \node[] at (2,10.5*0.865) (a) {\huge $\vdots$} ;

            \node[scale=.8] (A) at (.75,.865)   {a$_1$};
            \node[scale=.8] (A) at (.75,5*.865) {a$_2$};
            \node[scale=.8] (A) at (.75,9*.865) {a$_3$};

            \node[scale=.8] (A) at (3.2,.865)   {b$_1$};
            \node[scale=.8] (A) at (3.2,5*.865) {b$_2$};
            \node[scale=.8] (A) at (3.2,9*.865) {b$_3$};

            \node[scale=.8] (A) at (3.75,2*.865) {c$_1$};
            \node[scale=.8] (A) at (3.75,6*.865) {c$_2$};

            \node[scale=.8] (A) at (3.2,3*.865) {d$_1$};
            \node[scale=.8] (A) at (3.2,7*.865) {d$_2$};

            \node[scale=.8] (A) at (.75,3*.865) {e$_1$};
            \node[scale=.8] (A) at (.75,7*.865) {e$_2$};

            \node[scale=.8] (A) at (.4,4*.865) {f$_1$};
            \node[scale=.8] (A) at (.4,8*.865) {f$_2$};
        \end{tikzpicture}
        \caption{\centering Infinite path}
        \label{fig:s2-infinite}
    \end{subfigure}
    \caption{Systems where all equivalent particles have the same local information cannot locally differentiate between the configurations.}
    \label{fig:s1-and-s2}
\end{figure}
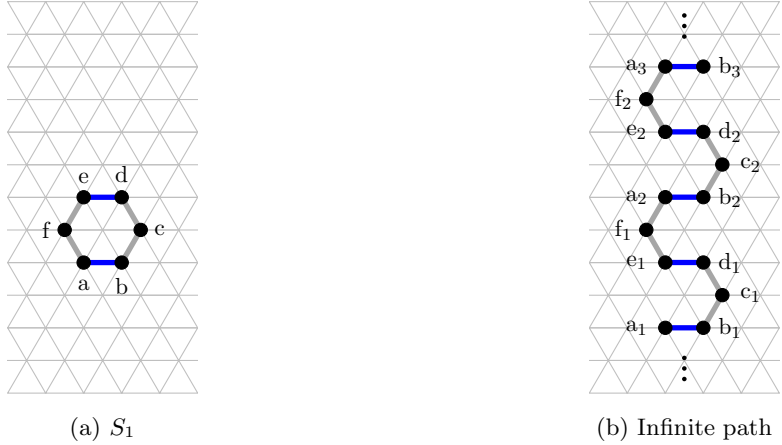

\begin{theorem} \label{th:termination-impossibility}
    There does not exist a terminating algorithm that solves LE under a fair synchronous scheduler when the initial configuration can contain holes and the particles cannot move, even if all particles agree on a common direction.
\end{theorem}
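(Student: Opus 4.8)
The plan is to run the classical anonymity/indistinguishability argument of \cite{angluin1980local} in the synchronous, constant--memory setting, exploiting that the particles agree only on the horizontal axis and have no chirality. Assume for contradiction that a deterministic algorithm $\mathcal{A}$ both solves LE and terminates explicitly on every connected configuration (possibly with holes). The first ingredient is a precise notion of \emph{equivalent particles}: two particles are equivalent if there is an isomorphism between their views that preserves the common East--West direction. Since the particles cannot detect chirality (cf.\ Note~\ref{obs:dark-blue-edges-orientation-detection}), such an isomorphism is allowed to be the reflection across the horizontal axis, so that a particle and its horizontal mirror image are indistinguishable to the algorithm. I would first record the key invariant: under a fair \emph{synchronous} scheduler, equivalent particles are in the same state and exchange matching messages at every round. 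This is proved by a straightforward induction on the round number -- equivalent particles start in the same anonymous initial state, see equivalent neighbourhoods, hence apply the same transition and remain equivalent. Consequently equivalent particles halt at the same round in the same state; in particular two distinct equivalent particles can never be distinguished, so if one is declared leader so is the other.

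Next I would convert constant memory into a \emph{uniform} bound that is independent of the size of the input. Running $\mathcal{A}$ synchronously on the vertically periodic infinite path of Figure~\ref{fig:s1-and-s2} (where $a_i,b_i,\dots,f_i$ are all pairwise equivalent under vertical translation), the whole system behaves like a single automaton acting on the finite set of per--level states; hence the execution is eventually periodic after at most a constant number $c=c(\mathcal{A})$ of rounds, and each level either halts within $c$ rounds or never halts while its neighbourhood still looks periodic. The crucial locality observation is that, in a finite path, boundary information travels at most one edge per round, so a particle at distance more than $c$ from both ends behaves exactly as in the infinite path for its first $c$ rounds, with $c$ independent of the length of the path.

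Finally I would build the witness: a long path that is symmetric under the horizontal reflection, its two ends being exact mirror images of one another and its length chosen much larger than $c$. The interior particles are mutually equivalent and, by the previous paragraph, cannot serve as a unique leader (electing one would elect many) and are already committed either to halting as followers within $c$ rounds or to not halting at all; the on--axis particle forced near the centre by connectivity is, for its first $c$ rounds, indistinguishable from these interior particles and so shares their fate. This pins any leader within distance $c$ of an end; but the two ends are horizontal mirror images, hence equivalent, so a leader at one end forces a leader at the other -- contradicting uniqueness. The step I expect to be the main obstacle is precisely this last reduction: because the companion positive result shows that LE \emph{itself} is solvable, the argument must use explicit termination in an essential way and must rule out the escape route in which $\mathcal{A}$ secretly identifies a genuinely unique particle (such as the easternmost on--axis particle) through a slow global process. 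Making the uniform constant $c$ do this work -- that is, showing that the bounded view available before a halting decision is forced is insufficient to break the residual top--bottom (chirality) symmetry -- is the technically delicate point, and is where the constant--memory hypothesis and the no--chirality assumption must be combined carefully.
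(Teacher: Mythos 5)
Your strategy breaks at the step you yourself flag as delicate, and the failure is not a technical inconvenience but a genuine counterexample to the chosen witness: explicitly terminating LE \emph{is} solvable on a finite path in this model. The two end particles each detect that they have a unique neighbour, send a token inward, and the tokens either meet at a single particle (which becomes the leader) or end up at two adjacent particles; in the latter case the common East--West direction breaks the tie, since on any edge of the grid the two endpoints see each other through a left port and a right port respectively (this is exactly the consistency property used in the proof of Theorem~\ref{th:correct-boundary-communication}), so the eastern endpoint wins. Consequently no contradiction can be extracted from a path, however long and however symmetric under horizontal reflection. Concretely, the false step is the claim that interior particles are ``already committed either to halting as followers within $c$ rounds or to not halting at all'': your locality argument only makes an interior particle mimic the infinite path for roughly as many rounds as its distance to the nearer end; after that, end effects reach it and it may perfectly well halt much later---for instance as the leader, when the inward tokens meet at the centre. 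The mirror-reflection argument cannot exclude this either, because the centre (particle or edge) is precisely the fixed locus of the reflection, so a leader there is compatible with the symmetry; and on a central edge the East--West agreement breaks the remaining tie. So the ``slow global process'' escape route you mention is real, and it defeats any proof whose witness has ends.

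What is missing is any use of the hypothesis that configurations may contain holes, and this is exactly what the paper's proof uses to force a termination bound on a legitimate input. The paper runs $\mathcal{A}$ on the six-particle ring $S_1$ of Figure~\ref{fig:s1}---a valid connected configuration with a hole and with no ends---so $\mathcal{A}$ must terminate on it after some $s$ rounds with a unique leader. Since $S_1$ unrolls onto the infinite path $S_{\emph{inf}}$ of Figure~\ref{fig:s2-infinite} preserving distance-one neighbourhoods (with chirality assignments flipped on alternate particles, which is admissible precisely because chirality is not shared), every particle of $S_{\emph{inf}}$ is, round for round, in the same state as its image in $S_1$. A sufficiently long finite segment $S_2$ of $S_{\emph{inf}}$ then contains two particles that are mapped to the elected particle of $S_1$ and lie at distance more than $s$ from both ends; during the first $s$ rounds they cannot distinguish $S_2$ from $S_{\emph{inf}}$, hence both irrevocably declare themselves leaders at round $s$, contradicting correctness on the valid input $S_2$. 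Note that the bound $s$ is forced by correctness on the holed ring, not derived from constant memory: your bound $c$ obtained from periodicity constrains only the execution on the (non-input) infinite path and says nothing about what $\mathcal{A}$ does on any finite configuration once end information propagates, which is why it cannot do the work you assign to it. To repair your argument you would have to replace the path witness by a closed configuration with no ends, i.e., essentially the holed ring that the paper uses.
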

\begin{proof}
     Let $S_1$ be the particle system in Figure \ref{fig:s1} and $S_{\emph{inf}}$ be the infinite particle system presented in Figure \ref{fig:s2-infinite}. Each occupied node $v_i$ in $S_{\emph{inf}}$ is mapped to an occupied node $v$ in $S_1$, such that $v_i \mapsto v$ for $v \in \{a,b,c,d,e,f\}$ and $i \in \mathbb{Z}$. We assume that all particles $b_i, c_i$ or $d_i$ (resp. $a_i, e_i$ or $f_i$) in $S_{\emph{inf}}$ have common (resp. opposite) chirality as the particle they are mapped to in $S_1$. Locally, the neighbourhood at distance one of $\{a_i, b_i, \ldots, f_i\} \in S_{\emph{inf}}$ is the same as the neighbourhood at distance one of $\{a,b,\ldots,f\} \in S_1$, so particles in $S_1$ and $S_{\emph{inf}}$ have the same input and local information. Let us assume that there exists an algorithm $\mathcal{A}$ solving Terminating LE in all connected configurations. Let us further assume that $\mathcal{A}$ solves Terminating LE in $S_1$ after $s$ synchronous steps. Let $u$ be a node in $S_{\emph{inf}}$ and let $S_2$ be the subgraph induced in $S_{\emph{inf}}$ by all particles at distance $k \geq (2*6 + 2) + (2*s + 2)$ from $u$. Now let us consider a synchronous execution of $\mathcal{A}$ in $S_2$. By construction, there must exist at least two particles in $S_2$, $q$ and $q'$, that are in the same state as the elected particle in $S_1$ after $s$ steps of $\mathcal{A}$ such that both $q$ and $q'$ are at distance at least $s+1$ from each endpoint of $S_2$. Therefore, $\mathcal{A}$ elects at least two leaders in $S_2$ and there cannot exist any algorithm solving Terminating LE.
\end{proof}

\section{Message Forwarding on Boundaries} \label{sec:messages-on-boundaries}
Existing algorithms such as \cite{bazzi2019stationary,dufoulon2021efficient} use chirality to move messages along boundaries. Although in this work we assume that particles do not have common chirality, here, we define a method to substitute the chirality assumption in existing algorithms in order to use techniques from previous work. The following procedure forwards messages on boundaries of grey components and we prove that a message originating at a particle in some boundary $B$ is received by particles on $B$ and it is not received by particles not on $B$. 

\begin{note} \label{lem:light-blue-neighbours-on-one-boundary}
    Two particles, $p$ and $p'$, that are connected by an LBE are neighbours in exactly one boundary, even if they both participate in more than one common boundary.
\end{note}
    The edge connecting $p$ and $p'$ can only be an LBE if $p$ and $p'$ are connected in the horizontal direction and have a common neighbouring node occupied by a particle. Furthermore, $p$ and $p'$ can only be neighbours on a boundary if they share at least one unoccupied neighbour. Therefore, $p$ and $p'$ must have exactly one occupied and one unoccupied common neighbour which means that they are neighbours in exactly one boundary.

\begin{note}\label{lem:grey-components-two-boundaries}
    Each particle belonging to a grey component $G$, is in at most two boundaries in $G$.
\end{note}
    Notice that any particle that is on three boundaries must have either a neighbour reached through port 0 or through port 3. Let $p$ be a particle that is on three boundaries and $p'$ be a neighbour of $p$, connected to $p$ on the horizontal direction. Then $p'$ cannot share any neighbour, $p_n$, with $p$, otherwise, $p$ is not on the common boundary of $p'$ and $p_{n}$. So the edge connecting $p$ and $p'$ must be dark blue. Therefore, $p$ cannot be on three boundaries without a DBE and consequently particles in grey components are on at most two boundaries.

\underline{Message Forwarding on Boundaries:} Let $p_s$, $p_1$ and $p_2$ be three consecutive particles on boundary $B$ and say that $p_s$ holds a message it will send to $p_1$. Call the common non--particle neighbour of $p_s$ and $p_1$ on $B$, $o_1$ and the common non--particle neighbour of $p_1, p_2$ on $B$, $o_2$. When $p_s$ sends the message to $p_1$ it attaches the label of the port of $p_1$ leading to $o_1$ using the orientation of $p_1$. Call that label \emph{boundary--label}. When $p_1$ receives the message from $p_s$ through port $z$ it reads the \emph{boundary--label} attached to the message and sets \textit{next--particle} to be the first particle to be reached by following the cyclic port ordering $<z, \emph{boundary--label}, \ldots>$, which in this case we have called $p_2$. $p_1$ detects whether it has common chirality with $p_2$, using the method described in Note \ref{obs:grey-edges-orientation-detection} if $(p_1,p_2)$ is a grey edge or the method described in Note \ref{obs:light-blue-edges-orientation-detection} if $(p_1,p_2)$ is an LBE. Call the port of $p_1$ leading to $p_2$ port $x$, the port of $p_2$ leading to $p_1$ port $y$, the port of $p_1$ leading to $o_2$ port $i$ and the port of $p_2$ leading to $o_2$ port $j$. If $p_1$ and $p_2$ have common chirality and $i = x-1 \mod 6$ (resp. $i = x + 1 \mod 6$), $p_1$ calculates $j = y + 1 \mod 6$ (resp. $j = y-1 \mod 6$). Equivalently, if $p_1$ and $p_2$ do not have common chirality and $i = x-1 \mod 6$ (resp. $i = x + 1 \mod 6$), $p_1$ calculates $j = y - 1 \mod 6$ (resp. $j = y+1 \mod 6$). Finally, $p_1$ sets \emph{boundary--label} to $j$ and sends the message to $p_2$. 

\begin{theorem} \label{th:correct-boundary-communication}
    Let $\{p,p'\}$ be a pair of particles neighbouring on at least one boundary of a grey component. If both $p$ and $p'$ use the above procedure, any message passing from $p$ to $p'$ always remains on the same boundary. 
\end{theorem}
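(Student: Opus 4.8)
The plan is to show that the message-forwarding procedure correctly identifies, at each particle, the next particle along the *same* boundary, so that a message starting on boundary $B$ never leaves $B$. The key geometric fact is that a boundary is a cyclic sequence of local boundaries $(u_{i}u_{i+1}, o_i)$ where consecutive shared non-particle nodes determine how one "turns" around each particle. So the crux is purely local: when a particle $p_1$ receives a message from its predecessor $p_s$ together with the \emph{boundary--label} (the port of $p_1$ pointing to the shared empty node $o_1$), it must reconstruct (i) which neighbour is the correct successor $p_2$ on $B$, and (ii) the port of $p_2$ pointing to the next shared empty node $o_2$, so that $p_2$ can in turn continue the process. I would structure the proof as an invariant maintained inductively along the walk.

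First I would state the invariant precisely: when the message arrives at $p_1$ through port $z$ carrying a \emph{boundary--label} $\ell$, the triple $(z, \ell)$ correctly encodes the incoming edge $(p_s p_1)$ and the non-particle node $o_1$ shared by $p_s$ and $p_1$ on $B$. Then I would verify the base case, namely that $p_s$ initializes the \emph{boundary--label} correctly by attaching the port of $p_1$ leading to $o_1$, which is exactly the data needed. For the inductive step, I would argue that the definition of a boundary forces the successor $p_2$ to be the first occupied node encountered when rotating around $p_1$ starting from the direction of $o_1$; this is precisely the particle selected by scanning the cyclic port ordering $\langle z, \ell, \ldots\rangle$, so \textit{next--particle} $=p_2$ is genuinely the next particle on $B$ and not on some other boundary. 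Here I would invoke Notes~\ref{lem:light-blue-neighbours-on-one-boundary} and \ref{lem:grey-components-two-boundaries} to guarantee that this choice is unambiguous within the grey component.

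The technical heart, and the step I expect to be the main obstacle, is verifying the port arithmetic that computes $j$ (the port of $p_2$ toward $o_2$) from $p_1$'s local view. Since the particles lack common chirality, $p_1$ cannot simply copy its own port labels; it must translate through the relative chirality of $p_2$, detected via Note~\ref{obs:grey-edges-orientation-detection} or Note~\ref{obs:light-blue-edges-orientation-detection}. I would carefully check the four cases in the procedure (common versus opposite chirality, crossed with $o_2$ lying at $x-1$ versus $x+1$ around $p_1$), confirming that each formula $j = y \pm 1 \bmod 6$ indeed names the port of $p_2$ that points to the \emph{same} physical node $o_2$. The essential observation is that $o_2$ is the common non-particle neighbour of $p_1$ and $p_2$, so it occupies the port between $x$ and $i$ at $p_1$ and, symmetrically, a port adjacent to $y$ at $p_2$; whether that adjacency is clockwise or counterclockwise from $y$ flips with $p_2$'s chirality, which is exactly what the case distinction records.

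Finally, I would close the induction by observing that after $p_1$ sets \emph{boundary--label} $=j$ and forwards to $p_2$, the invariant holds at $p_2$ with incoming port $y$ and label $j$, so the walk continues consistently. Since each step advances to the uniquely determined boundary successor and the boundary is a finite cycle, the message traverses only particles of $B$ and returns to its origin, never reaching a particle outside $B$. I would remark that the correctness relies on $p_1$ and $p_2$ sharing the specific unoccupied node $o_2$ on $B$, which is what distinguishes $B$ from any other boundary the two particles might jointly belong to, thereby completing the argument that the message always remains on the same boundary.
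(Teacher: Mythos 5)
Your proposal is correct and takes essentially the same route as the paper's proof: an induction along the message path whose base case is the originator attaching the correct \emph{boundary--label}, and whose key step is the local verification that each particle can compute the port of its successor leading to the shared non--particle node, using the common direction together with chirality detection via Notes \ref{obs:grey-edges-orientation-detection} and \ref{obs:light-blue-edges-orientation-detection}. The only difference is presentational: you verify the $j = y \pm 1 \bmod 6$ formulas by direct case analysis on chirality and the position of $o_2$, whereas the paper phrases the same check in terms of preservation of left/right port classes under the common direction.
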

\begin{proof}
    Let $p$ be the particle from which $p'$ receives the message. If $p$ is the particle the message originates from, $p$ decides the boundary, $B$, on which the message is sent. From the definition of a \textit{local boundary}, we know that $p$ and $p'$ share a common non--particle neighbour, say $o$. Due to the common direction, if $p$ is connected to $p'$ through a right port (resp. left port), $p'$ is connected to $p$ through a left port (resp. right port). Furthermore, if $(p,p')$ is grey and $p$ is connected to $o$ through a left port (resp. right port), $p'$ is also connected to $o$ through a left port (resp. right port). Particles connected by an LBE know they only have one common non--particle neighbour. Since $o$ is a common neighbour it must be reached by a port at distance one from the port connecting $p$ to $p'$. The type of edge (i.e., grey or LBE), whether $o$ is reached by a left or right port and the port of $p'$ leading to $p$ are known to $p$. Moreover, the handedness of $p'$ can be computed by Notes \ref{obs:grey-edges-orientation-detection} and \ref{obs:light-blue-edges-orientation-detection}. Therefore, $p$ can compute the port of $p'$ leading to $o$. If $p$ is not the particle the message originates from, $p$ learns the boundary the message is forwarded on by receiving the port label leading to the common non--particle neighbour defining the boundary when receiving $m$ by its predecessor on the boundary, in the same way as $p'$ learns the boundary in the case where the message originates from $p$ and the proof remains the same. 
\end{proof}

\section{Algorithm Description} \label{sec:algorithm-description}
The system consists of a number of grey components connected by DBEs. Each grey component elects a local leader. In each round, every component attempts to \textit{``compete''} with two neighbouring components, where \textit{``neighbouring''} refers to components connected by a DBE. The result of a competition for each component is \textit{win}, \textit{lose} or \textit{draw}. The first two results occur when the competing components detect a difference between them. In this case, the components merge creating a new component consisting of all the particles of the two previous components and the local leader of the winning component. The result of a competition is a draw when two components cannot detect a difference between them, for example when a component is competing with itself. In this case, the components remain separate and do not compete on that DBE again, unless at least one of them merges with another component. We prove that when no competition on any DBE is possible anymore, all particles belong to one common component with one leader. We address each part of the algorithm separately.

\subsection{Grey Leader Election} \label{sec:grey-LE}
We begin by electing a unique leader in every grey component. Each particle sees its neighbourhood at distance one and differentiates between grey, light blue and dark blue edges. For a DBE $uv$ connecting two particles, $p$ at $u$ and $p'$ at $v$, $p$ (resp. $p'$) removes $p'$ (resp. $p$) from its list of neighbours during Grey LE and treats $v$ (resp. $u$) as a node not occupied by a particle. In existing stationary LE algorithms (e.g., \cite{bazzi2019stationary}) and algorithms that present intermediate tools for stationary particles (e.g., \cite{dufoulon2021efficient,emek2019deterministic}), it is assumed that particles have common chirality which is used to exchange information along boundaries. For $k$--symmetric configurations, those algorithms elect $k \in \{1,2,3,6\}$ leaders in symmetric positions on the outer boundary. To facilitate the presentation of our algorithm, we say that we use the algorithm presented in \cite{bazzi2019stationary} in this step, however, any algorithm for stationary particles that terminates electing 1,2,3 or 6 heads in symmetric positions on the outer boundary could be used. In \cite{bazzi2019stationary} common chirality is used to move information along boundaries. We substitute the assumption of common chirality in \cite{bazzi2019stationary} in the parts of the system where this is possible (i.e., grey components) by the method in Section \ref{sec:messages-on-boundaries}. However, since the particles do not agree on chirality we execute the algorithm of \cite{bazzi2019stationary} in both directions on each boundary of each grey component simultaneously, electing $k \in \{1,2,3,6\}$ heads that know the number of elected particles for each direction of the outer boundary. Particles in inner boundary competitions learn that they are not elected when the competition on the unique outer boundary terminates and the procedure described in Section \ref{sec:tree-construction} begins. We do not repeat the pseudocode for the algorithm that is presented in \cite{bazzi2019stationary} but we assume that after its execution, every particle is either in state \texttt{Head} (i.e., a leader elected by the algorithm in \cite{bazzi2019stationary}) or \texttt{NotHead} (i.e., a particle which is not elected by the algorithm in \cite{bazzi2019stationary}). 

Each head calculates an ID for itself and a local ID for the $k - 1$ heads on the boundary for each direction, $dir$, it is elected in. The ID is the port number $a \in \{0,\ldots,5\}$ leading to the particle's neighbour on the boundary in $dir$. The calculated ID for the $i$--th head, $i = \{1,\ldots,k-1\}$, on the boundary is \emph{ID}$_i = a + i*\frac{6}{k} \mod 6$, using the local port numbering. Since the heads do not actually have common chirality, different heads may compute distinct IDs for the same head. However, since directions 0 and 3 are common, if a head $h$ computes 0 or 3 for its own ID, all other heads on the boundary also compute 0 or 3 respectively as the ID of $h$.  We impose the ordering $\emph{ID} = 0 > \emph{ID} = 3 > \emph{ID} = \{1,5\} > \emph{ID} = \{2,4\}$. A particle remains a head if its ID is the maximum with respect to the given ordering among the IDs it has locally computed. That is, if there exists a head with $\emph{ID} = 0$, that particle remains a head and all other heads withdraw. If no such particle exists, the particle with $\emph{ID}=3$ remains a head and so on. Eventually, two heads remain per boundary after winning the $dir$ and the $dir'$ competitions.

The two remaining heads compete so that only one head remains on the boundary. Each of the heads sends a competition message in the direction it was elected. Either both messages are received by a common particle, $p_c$, or by two neighbouring particles $p_{c_1}, p_{c_2}$. In the first case, the single particle picks one of the heads based on its local orientation. In the latter case, $p_{c_1}, p_{c_2}$ know which is the rightmost particle by looking at the ports connecting them. The rightmost particle then sends an \textit{elected} message to the head that reached it and the leftmost particle sends a \textit{not--elected} message to the remaining head. After this step, one leader remains and it is on the outer boundary. The leader then initiates the \textit{Tree Construction} phase of the algorithm and particles located on inner boundaries learn that LE in the component has terminated. This algorithm is also presented in Procedure \ref{proc:grey-LE}. Notice that after the execution of Procedure \ref{proc:grey-LE}, each particle on the outer boundary is either in state \texttt{Leader} or in state \texttt{Follower}. 

\begin{procedure}[!h]
    \caption{GreyLE()}
    \label{proc:grey-LE}

    States: \{\textbf{\small Head}, \textbf{\small NotHead}, MovingInformation, CompetingHead\} \\

    \underline{In state \texttt{Head}:} \\
    \quad Calculate own ID \\
    \quad Calculate ID of other heads \\
    \quad \textsc{ElectionPerDirection} ( \textit{only head in direction}: \texttt{CompetingHead}; \\
    \hspace{5.1cm} \textit{own ID \hspace{.1cm}  = 0}: \texttt{CompetingHead}; \\ 
    \hspace{5.1cm} \textit{other ID              = 0}: \texttt{NotHead};\\ 
    \hspace{5.1cm} \textit{own ID  \hspace{.1cm} = 3}: \texttt{CompetingHead}; \\
    \hspace{5.1cm} \textit{other ID              = 3}: \texttt{NotHead}; \\
    \hspace{5.1cm} \textit{own ID \hspace{.05cm} = 1 or 5}: \texttt{CompetingHead}; \\
    \hspace{5.1cm} \textit{own ID \hspace{.05cm} = 2 or 4}: \texttt{NotHead} ) \\

    \underline{In state \texttt{NotHead}:} \\
    \quad \textsc{ElectionPerBoundary} ( \\
    
    \qquad \textit{receive competition--message from dir \emph{AND} \\
    \qquad not receive competition--message from dir$'$ \emph{AND} \\ \qquad neighbour does not receive competition--message from dir$'$}: \\
    \hspace{2cm} forward competition--message to dir, \texttt{MovingInformation};\\
    
    \qquad \textit{receive competition--message from dir \emph{AND} \\
    \qquad receive competition--message from dir$'$}: \\
    \hspace{2cm} send leader--message in dir$'$ and follower--message in dir, \\
    \hspace{2.1cm}\texttt{Follower}; \\
    
    \qquad \textit{receive competition--message from dir \emph{AND} \\
    \qquad neighbour receives competition--message from dir$'$ \emph{AND} \\
    \qquad sees neighbour through a right port}: \\
    \hspace{2cm} send follower--message to dir$'$, \texttt{Follower}; \\ 

    \qquad \textit{receive competition--message from dir \emph{AND} \\ \qquad neighbour receives competition--message from dir$'$ \emph{AND} \\
    \qquad sees neighbour through a left port}: \\
    \hspace{2cm} send leader--message to dir$'$, \texttt{Follower} ) \\

    \underline{In state \texttt{MovingInformation}:} \\
    \quad \textsc{ElectionPerBoundary} ( \\
    \qquad \textit{receive competition--message from dir$'$}: \\
    \hspace{2cm} send competition message to dir, \texttt{Follower} ) \\

    \underline{In state \texttt{CompetingHead}:} \\
    \quad send competition--message in the direction of election\\
    \quad \textsc{ElectionPerBoundary} ( \\
    \qquad \textit{receive leader--message}: \texttt{Leader}; \\
    \qquad \textit{receive follower--message}: \texttt{Follower} ) \\
    
\end{procedure}

\subsubsection*{Correctness}
We prove the correctness of Grey Leader Election. 

\begin{lemma} \label{lem:LE-on-boundary}
    After executing the LE algorithm of \cite{bazzi2019stationary} in both directions for every boundary, 1,2,3 or 6 leaders are elected on symmetric positions in the outer boundary for each direction and each head knows the direction of the competition in which it was elected and the number $k$ of elected heads in that direction. 
\end{lemma}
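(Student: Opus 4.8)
The plan is to reduce correctness to the known guarantees of the algorithm of \cite{bazzi2019stationary}, which assumes common chirality, by showing that our directional message-forwarding faithfully simulates a chirality-endowed execution in each of the two traversal directions. First I would recall precisely what \cite{bazzi2019stationary} provides: when all particles share a chirality, the algorithm terminates with $k \in \{1,2,3,6\}$ heads placed at rotationally symmetric positions on the unique outer boundary, where $k$ equals the order of the rotational symmetry of the configuration. The entire dependence of that algorithm on chirality is confined to the operation of passing messages consistently along a boundary in a fixed rotational sense; every other computation is local to a particle and its immediate neighbourhood.

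Next I would argue that the procedure of Section \ref{sec:messages-on-boundaries}, together with Theorem \ref{th:correct-boundary-communication}, supplies exactly this missing capability within a grey component. Since a grey component contains no DBEs, every adjacency on one of its boundaries is a grey edge or an LBE, so by Notes \ref{obs:grey-edges-orientation-detection} and \ref{obs:light-blue-edges-orientation-detection} each particle can detect its relative chirality with the neighbour to which it forwards, and by Theorem \ref{th:correct-boundary-communication} a message started in a chosen rotational sense stays on its boundary. The common horizontal direction (the fixed ports $0$ and $3$) gives a canonical way to name the two senses of traversal, so we can speak unambiguously of direction $dir$ and direction $dir'$. Running the algorithm of \cite{bazzi2019stationary} in direction $dir$ then behaves exactly as an execution in which all particles carry the chirality compatible with $dir$: every message that chirality would route is routed identically by the forwarding procedure, and all remaining steps are purely local and hence unchanged.

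Having established this simulation, the conclusion follows by applying the guarantee of \cite{bazzi2019stationary} separately to each direction: the $dir$-instance elects $k$ heads at symmetric positions on the outer boundary, and so does the $dir'$-instance. The value of $k$ is the order of the rotational symmetry group of the configuration, which is an intrinsic property independent of any chirality choice; hence both directions yield the same $k$. Because each head is produced by a specific directional instance, it records the direction in which it was elected, and because the symmetry-counting performed by \cite{bazzi2019stationary} is carried out inside each directional run, each head also learns $k$ for its direction.

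The main obstacle I expect is making the simulation claim rigorous: one must verify that every chirality-dependent action of \cite{bazzi2019stationary} is reproduced by the combination of detected pairwise chirality (Notes \ref{obs:grey-edges-orientation-detection} and \ref{obs:light-blue-edges-orientation-detection}), the common direction, and the boundary-preserving forwarding of Theorem \ref{th:correct-boundary-communication}. This amounts to exhibiting a step-by-step correspondence between the state evolution in our chirality-free, direction-endowed model and the state evolution of the original chirality-based algorithm, and to checking that the two directional instances run on the same boundary do not interfere with one another. Establishing this correspondence carefully, rather than the symmetry counting which is inherited directly from \cite{bazzi2019stationary}, is where the real work lies.
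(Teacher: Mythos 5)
Your proposal is correct and follows essentially the same route as the paper: both reduce the lemma to the known guarantees of \cite{bazzi2019stationary}, with Theorem \ref{th:correct-boundary-communication} (plus Notes \ref{obs:grey-edges-orientation-detection} and \ref{obs:light-blue-edges-orientation-detection}) substituting for the chirality assumption in boundary message passing, and with the direction of election known by construction of the per-direction runs. Your write-up merely spells out the simulation correspondence in more detail (and adds the unneeded observation that both directions elect the same $k$), whereas the paper's proof states the substitution and cites the termination conditions of \cite{bazzi2019stationary} directly.
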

\begin{proof}
    The fact that 1,2,3 or 6 leaders are elected comes directly from the correctness of the LE algorithm in \cite{bazzi2019stationary}. The chirality assumption of \cite{bazzi2019stationary} is substituted by Theorem \ref{th:correct-boundary-communication} which proves that particles can consistently exchange information along boundaries of grey components. From the termination conditions in \cite{bazzi2019stationary}, we know that the algorithm stops when the symmetry between the heads on the outer boundary cannot be broken, that the algorithm does not terminate in inner boundaries and that the number of heads on the outer boundary is $k \in \{1,2,3,6\}$ depending on the symmetry of the system. Furthermore, the direction of the competition is known by construction of Grey Leader Election, therefore all conditions of the lemma statement are satisfied.
\end{proof}

Each head on the outer boundary can compute a set of distinct IDs, each corresponding to another head on the boundary, with respect to its own coordinate system, even in the setting of \cite{bazzi2019stationary}. However, in that case it is possible that different particles locally compute the same ID for distinct heads in the boundary. We show that by adding agreement on one direction, the heads do not compute the same set of IDs corresponding to different particles. 

\begin{lemma} \label{lem:unique-head-in-dir}
    Using the locally computed identifiers of all heads on the boundary for a given direction, all heads choose the same particle to remain a head, without communicating with each other. Eventually, only one head remains in each direction on the outer boundary. 
\end{lemma}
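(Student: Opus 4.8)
The plan is to reduce the claim to a geometric fact about the $k$ boundary-neighbour directions of the heads, using that the \emph{tier} of an ID in the ordering $0 > 3 > \{1,5\} > \{2,4\}$ depends only on the physical direction of a head's boundary neighbour and not on its chirality. By Lemma~\ref{lem:LE-on-boundary} each head knows $k$ and its own ID $a$, and the $k$ heads occupy rotationally symmetric positions, so their boundary-neighbour directions are the grid directions $\theta_0, \theta_0+\tfrac{360}{k}, \dots, \theta_0+(k-1)\tfrac{360}{k}$ degrees. Since ports $0$ and $3$ are globally East and West, an East neighbour always gives ID $0$, a West neighbour always gives ID $3$, the two directions adjacent to East always give an ID in $\{1,5\}$, and the two adjacent to West always give an ID in $\{2,4\}$, \emph{regardless of handedness}. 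Thus each head's tier is a function of its boundary-neighbour direction alone.

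Next I would show, by a short case analysis on $k\in\{1,2,3,6\}$, that exactly one of the $k$ symmetric directions lies in the maximal occupied tier. For $k=6$ the directions are all six grid directions, so East occurs exactly once and the maximal tier is $\{0\}$. For $k=3$ the three directions are spaced $120$ degrees apart and hence contain exactly one of East or West (but not both), placing the maximum at $\{0\}$ or $\{3\}$ with a single head. For $k=2$ the two antipodal directions are either East/West (maximal tier $\{0\}$) or a diagonal pair, in which case one lies in $\{1,5\}$ and the other in $\{2,4\}$, so the maximum is again attained by a single head; $k=1$ is trivial. In every case a unique head has maximal tier, and this is the head that should survive.

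The crux is agreement: all heads, computing the others' IDs via $\mathrm{ID}_i = a + i\cdot\tfrac{6}{k}\bmod 6$ and without communicating, must designate the same survivor. The obstacle is that heads of different chirality compute different \emph{exact} IDs for the same particle, so I must argue at the level of tiers rather than raw values. The key point is that, for any head with own ID $a$, the offset $i^\ast$ at which it reaches the unique head having an East (resp.\ West) boundary neighbour satisfies $a + i^\ast\cdot\tfrac{6}{k}\equiv 0$ (resp.\ $\equiv 3$) modulo $6$; hence every head computes $0$ for the East head and $3$ for the West head, independently of handedness. This is exactly the ``common direction'' guarantee stated before the lemma. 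Consequently, whenever the survivor lies in tier $\{0\}$ or $\{3\}$ — all cases except the $k=2$ diagonal one — every head sees it carrying the unique maximal ID and withdraws in its favour. In the remaining $k=2$ diagonal case there is a single competitor, and since adding $\tfrac{6}{2}=3$ interchanges $\{1,5\}$ and $\{2,4\}$, each head computes the correct tier for the other, and both agree that the $\{1,5\}$-head survives.

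Finally, because each head already knows $k$ and its own ID, the whole comparison is local and requires no messages; a head stays a head exactly when its ID is the agreed maximum and otherwise withdraws, so uniqueness of the maximal-tier head leaves exactly one head per direction on the outer boundary, yielding both conclusions. I expect the third paragraph to be the delicate part: verifying that the tier obtained through the formula is chirality-independent and matches the true tier of the East/West head. Once the tier structure is in place, uniqueness of the winner and the ``no communication / eventual single head'' statements follow immediately.
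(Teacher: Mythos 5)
Your overall route is essentially the paper's: you use the ordering $0 > 3 > \{1,5\} > \{2,4\}$, the chirality-invariance of ports $0$ and $3$, and the $\{1,5\}/\{2,4\}$ flip for the diagonal $k=2$ case; your geometric packaging (tiers as functions of the physical direction of the boundary neighbour, plus uniqueness of the maximal occupied tier) is a cleaner statement of the same idea. However, the step you yourself single out as the key point is false as stated: it is not true that every head computes $0$ at the offset $i^\ast$ of the East head. Counterexample: $k=3$, with the $120^\circ$ symmetry acting counterclockwise and the heads' dir-neighbours pointing NW, SW, E at offsets $0,1,2$ along the boundary. A clockwise-handed head at offset $0$ has own ID $a=4$ (its port $4$ points NW) and computes $\mathrm{ID}_1 = 4+2 \equiv 0$ and $\mathrm{ID}_2 = 4+4 \equiv 2 \pmod 6$, thus attributing $0$ to the head at offset $1$, whereas the East head sits at offset $2$. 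In general, a head whose handedness disagrees with the global rotational sense of the symmetry attributes its computed IDs to the mirrored offsets $i \mapsto k-i$, so the per-particle correspondence you assert cannot be proved --- agreement on \emph{which} offset carries the winner genuinely fails, and the lemma's ``choose the same particle'' can only hold in the operational sense that the remain/withdraw decisions are consistent.

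The damage is contained, because neither the lemma's conclusion nor the algorithm needs that correspondence: in Procedure~\ref{proc:grey-LE} each head only tests whether its \emph{own} ID is maximal among the set of IDs it computed, so all that is required is the set-level fact that the tiers occurring in any head's computed set coincide with the tiers of the actual heads' own IDs. This follows from your own first observation: the ports $a + i\cdot\frac{6}{k} \bmod 6$ of a head point, as a set, to the physical directions $\{\theta + i\cdot\frac{360}{k}\}$ regardless of handedness (handedness only permutes which offset points where), and tiers are functions of physical direction alone. With your third paragraph weakened to this statement, your second paragraph (exactly one head lies in the maximal occupied tier) immediately yields that exactly one head keeps head status and all others withdraw. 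This repaired argument is then in substance the paper's proof, which works at precisely this set level: for $k\in\{3,6\}$ it notes the computed set always contains $0$ or $3$, and for $k=2$ it uses the $\{1,5\}/\{2,4\}$ flip.
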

\begin{proof}
    Due to the particles not having chirality, the sets of IDs computed by different heads may be different. The calculation of 0 or 3 as the ID for any head is consistent among all heads since 0 and 3 are common directions. Suppose that for $k \in \{3,6\}$ there does not exist a particle with $\emph{ID} = 0$ or $\emph{ID} = 3$. Using the formula for calculating IDs, \emph{ID}$_i = a + i*\frac{6}{k} \mod 6$, the ID calculated for at least one of the heads is either 0 or 3 and we immediately get a contradiction. For $k=2$ it is possible that no head with $\emph{ID} \in \{0,3\}$ exists. This time, we use the fact that port 1 is mapped either to port 1 or port 5 (similarly for port 5), depending on the chirality of the particles and the same holds for ports 2 and 4. A head, $h$, calculating $a_h \in \{1,5\}$ (resp. $a_h \in \{2,4\}$) for its ID calculates $a_{h'} \in \{2,4\}$ (resp. $a_{h'} \in \{1,5\}$) for the other head, $h'$. In the same setting $h'$ computes $a_{h'} \in \{2,4\}$ (resp. $a_{h'} \in \{1,5\}$) for its own ID and $a_h \in \{1,5\}$ (resp. $a_h \in \{2,4\}$) for the ID of $h$. Thus $a \in \{1,5\}$ corresponds to only one of the two heads, and it is elected. Therefore, in all cases one head remains on the outer boundary for each direction in each grey component.
\end{proof}

\begin{lemma} \label{lem:unique-boundary-leader}
    For a boundary $B$ where exactly one leader remains for each direction, the leaders can always reach either one common node or a pair of neighbouring nodes occupied by particles. The reached particle(s) can differentiate between the leaders and elect a unique leader for the boundary.
\end{lemma}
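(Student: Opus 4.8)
The plan is to view the boundary $B$ as a single cyclic sequence of its particles and to track the two competition messages as two tokens traversing this cycle in opposite orientations. By Theorem~\ref{th:correct-boundary-communication} each message, once launched on $B$, stays on $B$, so the whole argument lives on one fixed cycle (a well-defined cyclic traversal, even if the walk revisits a particle). If the two heads that won the $dir$ and $dir'$ competitions coincide there is nothing to prove; otherwise they sit at two distinct positions that cut the cycle into two arcs, and each head emits its competition message in its own election direction. Since $dir$ and $dir'$ are the two opposite traversal orientations of $B$, the two tokens approach one another along exactly one of these arcs.

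First I would establish the \emph{meeting}. The clean invariant is the number of boundary edges separating the two token-carrying particles: every forwarding step performed by a particle in state \texttt{MovingInformation} decreases it by one, and the guard forbidding a token from being forwarded onto a particle that already holds the opposing message prevents a token from jumping \emph{past} the other. Hence the separation is monotone, and there is no deadlock: if the tokens are more than one edge apart, at least one of them can still advance, while if they are exactly one edge apart the adjacent condition of Procedure~\ref{proc:grey-LE} fires. So after finitely many steps the tokens reach a configuration in which either a single particle $p_c$ holds both messages, or two grid-adjacent particles $p_{c_1},p_{c_2}$ each hold one of them across their connecting boundary edge. This simultaneously shows that this phase terminates and that the reached configuration is one of exactly the two shapes asserted by the lemma.

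Next I would show the reached particle(s) elect a \emph{unique} head. In the common-node case, $p_c$ receives the two messages through its two distinct, fixed-numbered boundary ports; since a single particle now makes the entire decision, any fixed local tie-break (e.g.\ prefer the smaller incoming port) selects one head deterministically, with no inter-particle agreement required. In the adjacent-pair case the decision is joint, and this is precisely where the common direction does the work that chirality does in \cite{bazzi2019stationary}: the edge $p_{c_1}p_{c_2}$ is a genuine grid edge and no edge of $G_\Delta$ is vertical, so it has a well-defined East--West tilt. Consequently exactly one of $p_{c_1},p_{c_2}$ sees the other through a right port while the other sees it through a left port (the same consistency used in the proof of Theorem~\ref{th:correct-boundary-communication}). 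Declaring the right endpoint's head the winner, sending it the leader-message and the other head the follower-message, makes exactly one head become \texttt{Leader} and the other \texttt{Follower}.

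I expect the main obstacle to be the meeting step, both under the asynchronous scheduler and when $B$ is a closed walk that may revisit a particle. The former is handled by the monotone-separation invariant and the no-deadlock observation above; the latter by noting that the \emph{boundary--label} mechanism of Section~\ref{sec:messages-on-boundaries} makes each token's position a well-defined point of the cyclic traversal rather than merely a particle, so the invariant is still meaningful. The real crux, however, is isolating the small geometric claim that every edge between two boundary-adjacent particles is a right port at one endpoint and a left port at the other; once this is granted, the common-node subcase is immediate and the adjacent-pair subcase reduces to this single fact, which is exactly the point where agreement on one direction, but not chirality, suffices to break the final symmetry.
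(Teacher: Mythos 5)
Your proposal is correct and follows essentially the same route as the paper's proof: the two competition messages travel in opposite directions until they meet at a single particle or at two adjacent particles, the single particle breaks the tie locally, and in the adjacent case the fact that no edge of $G_\Delta$ is vertical lets the pair identify the rightmost endpoint via left/right ports (which is exactly where the common direction replaces chirality). Your monotone-separation invariant and the treatment of boundaries as closed walks merely make explicit the meeting and termination claims that the paper states without elaboration.
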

\begin{proof}
    Since the heads send \textit{competition messages} in opposite directions on the boundary, the messages must eventually reach either a common particle or two neighbouring particles. If the comparison messages are held by a single particle, that particle decides the leader. Otherwise, the comparison messages are held by adjacent particles. Due to the triangular grid, one of the particles is to the right of the other (either horizontally or diagonally) and the particles can locally decide which one is the rightmost in the pair due to the port numbers. Since the rightmost particle decides the leader, a unique leader is elected.
\end{proof}

From the above lemmas, we obtain the following.

\begin{theorem} \label{th:grey-leader-election-correct}
    After the execution of the Grey Leader Election algorithm, a unique leader is elected on the outer boundary of every grey component.
\end{theorem}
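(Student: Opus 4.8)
The plan is to assemble Theorem~\ref{th:grey-leader-election-correct} from the three preceding lemmas, which partition the Grey Leader Election procedure into its three logical stages. First I would invoke Lemma~\ref{lem:LE-on-boundary} to establish the starting point: after running the algorithm of \cite{bazzi2019stationary} in both directions on every boundary of the grey component, we obtain $k \in \{1,2,3,6\}$ heads in symmetric positions on the \emph{outer} boundary for each of the two directions, with each head knowing both its direction of election and the value of $k$. The key point to extract here is that all surviving heads live on the outer boundary (inner-boundary competitions do not terminate) and that the symmetry of the configuration is exactly $k$-fold, so there is genuinely a $k$-symmetric set of candidates to reduce.

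Next I would apply Lemma~\ref{lem:unique-head-in-dir} to collapse, for each fixed direction, the $k$ symmetric heads down to a single head. The content of this step is the ID-comparison scheme: each head computes IDs for all heads on the boundary using $\emph{ID}_i = a + i\cdot\frac{6}{k} \bmod 6$, and the ordering $0 > 3 > \{1,5\} > \{2,4\}$ selects a unique survivor. Since directions $0$ and $3$ are globally consistent (ports $0$ and $3$ are common to all particles), all heads agree on which head, if any, has $\emph{ID}\in\{0,3\}$; and the lemma disposes of the remaining $k=2$ case using the fact that without chirality port $1$ maps to $1$ or $5$ and port $2$ to $2$ or $4$, so the $\{1,5\}$-versus-$\{2,4\}$ distinction breaks the tie consistently. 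After this stage exactly two heads remain in the component, one having won the $dir$ competition and one having won the $dir'$ competition.

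Finally I would use Lemma~\ref{lem:unique-boundary-leader} to merge these two remaining heads into a single leader. Each head sends a competition message in its direction of election; since the messages travel in opposite senses along the same outer boundary and are confined to that boundary by Theorem~\ref{th:correct-boundary-communication}, they must meet, either at a single particle or at two horizontally-or-diagonally adjacent particles. In the first case the meeting particle picks a winner from its local orientation; in the second, the two particles use their connecting port numbers to identify the rightmost one, which then elects the leader that reached it. This yields exactly one surviving leader, and because the survivors at every stage lay on the outer boundary, the elected leader is on the outer boundary. Chaining the three lemmas thus gives the theorem.

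The only genuine subtlety, and hence where I would be most careful, is the transition between stages rather than any single stage in isolation: I must check that the output guarantees of one lemma exactly match the input hypotheses of the next, in particular that ``one head per direction survives'' (Lemma~\ref{lem:unique-head-in-dir}) feeds correctly into the ``exactly one leader remains for each direction'' hypothesis of Lemma~\ref{lem:unique-boundary-leader}, and that all of this takes place on a \emph{single} outer boundary so that the two competition messages are guaranteed to meet. Since each lemma is already proved, the theorem itself is essentially a bookkeeping composition, and I expect the proof to be short: state that the result follows from Lemmas~\ref{lem:LE-on-boundary}, \ref{lem:unique-head-in-dir} and~\ref{lem:unique-boundary-leader} applied in sequence, with a sentence noting that all surviving candidates remain on the outer boundary throughout.
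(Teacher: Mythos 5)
Your proposal matches the paper's own argument exactly: the paper derives Theorem~\ref{th:grey-leader-election-correct} directly from Lemmas~\ref{lem:LE-on-boundary}, \ref{lem:unique-head-in-dir} and~\ref{lem:unique-boundary-leader} with no further proof text (``From the above lemmas, we obtain the following''), which is precisely the chaining you describe. Your added care about the interfaces between the lemmas and the fact that all survivors remain on the single outer boundary is sound bookkeeping, not a deviation.
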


\subsection{Tree Construction} \label{sec:tree-construction}
We encode each grey component so that leaders in different grey components, connected by DBEs, can compare their components. We create a spanning tree of the component, using a standard technique from the distributed computing literature, in which the root is the unique leader of the component. Each particle in the tree computes a label encoding its neighbourhood. By following the reasoning in \cite{di2020shape}, we encode the neighbourhood of a particle $p$ using six characters, each corresponding to the node reachable through port $i \in \{0,\ldots,5\}$. Each character in the label has one of the following values. 
\begin{itemize}
    \item \textit{P}: Node occupied by the parent of $p$.
    \item \textit{C}: Node occupied by a child of $p$.
    \item \textit{D}: Node connected to $p$ through a DBE. Note that if port[0] is encoded as D the DBE is outgoing while if port[3] is encoded as D the DBE is incoming. 
    \item \textit{E}: Node not occupied by a particle.
    \item \textit{N}: Neighbour particle of $p$ that is not a neighbour on the tree or a DBE.
\end{itemize}
Each particle knows its parent and its children, as well as the information that is locally observed from its immediate neighbourhood (i.e., non--particle or particle nodes and incoming or outgoing DBEs). Notice that the characters in the label are mutually exclusive during Tree Construction, so it cannot be the case that more than one characters in \{P,C,D,E,N\} are needed to represent a neighbour of $p$. We do not repeat the pseudocode which can be found in \cite{di2020shape}, however, we assume that after the execution of this part of the algorithm, each particle is in state \texttt{OutgoingDBE}. The correctness of the tree construction phase comes directly from \cite{di2020shape}.

\begin{procedure}[h!]
    \caption{TreeConstruction()}
    \label{proc:tree-construction}

    States: \{\textbf{\small Leader, Follower}\} \\

    \tcc{Execute the tree construction procedure of \cite{di2020shape} and switch to state OutgoingDBE in Procedure \textsc{CompetitionEdges}}
\end{procedure}

\subsection{Component Competition} \label{sec:component-competition}
In this section, we describe how particles in different components are compared. We begin by describing how the trees defined in Section \ref{sec:tree-construction} are traversed in order to be compared to adjacent trees through the directed DBEs connecting them. Then we discuss how components choose which DBEs to compete on and how components merge after a comparison, if merging is possible. Finally, we show that comparisons eventually stop and that only one leader remains in the system when no more comparisons are made. We refer to each iteration of the set of procedures in this section (i.e., Tree Traversal, Tree Comparisons and Merging Components) by a component as a \textit{round}.

\begin{figure}[htpb]
    \centering
    \vspace{-.5cm}
    \scalebox{.8}{
    \begin{tikzpicture}[->,>=stealth',every node/.style={circle,draw},level 1/.style={sibling distance=35mm},level 2/.style={sibling distance=15mm}
    ]
    \node[label={[xshift=-.5em, yshift=0em] 1}, label={[xshift=0em, yshift=-2.5em] 7}, label={[xshift=.5em, yshift=0em] 9}] (nA) {}
       child { node[label={[xshift=-1em, yshift=-1.3em] 2}, label={[xshift=0em, yshift=-2.5em] 4}, label={[xshift=1em, yshift=-1.3em] 6}] (nB) {}
                  child { node[label={[xshift=-1em, yshift=-1.3em] 3}] (nC) {} }
                  child { node[label={[xshift=-1.5em, yshift=-1.3em] 5}] (nD) {} }
                }
       child { node[label={[xshift=-1.5em, yshift=-1em] 8}] (nE) {} };
    
      \draw[->,black!70,rounded corners,dashed,line width=0.7pt]
        ($(nA) + (-0.4,0.2)$) --
        ($(nB) +(-0.3,0.4)$) --
        ($(nB) +(-0.6,0.0)$) --
        ($(nC)  +(-0.4,0.3)$) --
        ($(nC)  +(-0.5,0.0)$) -- 
        ($(nC)  +(-0.4,-0.35)$) --
        ($(nC)  +(0.0,-0.5)$) --
        ($(nC)  +(0.4,-0.35)$) --
        ($(nC)  +(0.5,0.0)$) --
        ($(nB)  +(0.0,-0.4)$) --
        ($(nD)  +(-0.4,0.3)$) --
        ($(nD)  +(-0.3,0.0)$) -- 
        ($(nD)  +(-0.2,-0.35)$) --
        ($(nD)  +(0.0,-0.5)$) --
        ($(nD)  +(0.4,-0.35)$) --
        ($(nD)  +(0.5,0.0)$) --
        ($(nD)  +(0.4,0.2)$) --
        ($(nB)  +(0.4,0.0)$) --
        ($(nA)  +(0.0,-0.4)$) --
        ($(nE)  +(-0.6,0.0)$) --
        ($(nE)  +(-0.4,-0.35)$) --
        ($(nE)  +(0.0,-0.5)$) --
        ($(nE)  +(0.4,-0.35)$) --
        ($(nE)  +(0.5,0.2)$) --
        ($(nE) +(0.3,0.4)$) --
        ($(nA) + (0.4,0.2)$)
        ;
    \end{tikzpicture}
}
\caption{An example of cyclic--DFS demonstrating the traversal of a tree represented as a ring. The numbers denote the order in which each node is visited.}
\label{fig:tree-traversal}
\end{figure}
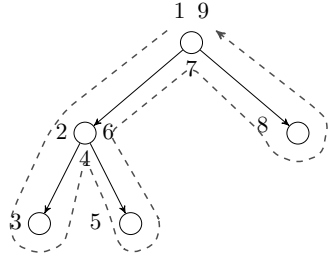

\underline{\textbf{Tree Traversal:}} Each node orders its children in increasing order with respect to its port numbers so that the ordering shown in Figure \ref{fig:tree-traversal} can be obtained. We call the ring calculated in such a way ``cyclic--DFS'', also known as the Euler tour of the tree. Notice that the five node tree of Figure \ref{fig:tree-traversal} is represented by a nine--node ring and that in the traversal order of Figure \ref{fig:tree-traversal}, the root is the first as well as the last node. The root marks its first label as \textit{root--label} and its last label as the \textit{last--node}. We later use the ring encoding to move information in the form of labels and tokens between particles of the same tree. 

\underline{Encoding a tree as a ring:} Each node of the tree is split into $c + 1$ virtual nodes called \emph{agents}, where $c$ is the number of children of that node in the tree. Each agent represents one node of the equivalent ring. Let $root = v_1,v_2,\ldots,v_n$ be the nodes of the tree encoding of the component. We write $v_i.agent_j$ for $i \in \{1,\ldots,n\}$ and $j \in \{1,\ldots,7\}$ to describe the $j$-th time node $v_i$ is visited in cyclic DFS. Equivalently, $v_i.agent_j$ also denotes that $j-1$ children of $v_i$ have been visited in the traversal. For a node of the tree, $v$, we denote the parent of $v$ by $v^-$ and we write $v^+_{1},\ldots,v^+_{z}$ to represent each of the $1 \leq z \leq 6$ children of $v$. We define pre--node and next--node in terms of nodes of the tree and agents in the following table. In ``Label Forwarding Cycle'' we show how to move information using the ring. The ``Label Forwarding Cycle'' procedure can be started by any arbitrary agent, say $p.agent$. The procedure finishes when the label of a target node, call it target--label, reaches $p.agent$.

\begin{table}[ht]
\begin{center}
\vspace{-.2cm}
\scalebox{1}{
\renewcommand{\arraystretch}{1.2}
\begin{tabular}{ |c|c|c|c| } 
 \hline
 node & agent & pre--node & next--node \\ 
 \hline
 root & 1 & $v_1.agent_{last}$ & $v^+_{1}.agent_1$ \\ 
 root & last & $v^+_{z}.agent_{last}$ & $v_1.agent_1$  \\
 root & $1 < y < \emph{last}$ & $v^+_{y-1}.agent_{last}$ &  $v^+_{y}.agent_{1}$ \\
 internal, $x$-th child of $v^-$ & 1 & $v^-.agent_{x}$  & $v^+_{1}.agent_{1}$ \\
 internal, $x$-th child of $v^-$ & last & $v^+_{z}.agent_{last}$ & $v^-.agent_{x}$ \\
 internal, $x$-th child of $v^-$ & $1 < y < \emph{last}$ & $v^+_{y-1}.agent_{last}$ & $v^+_{y}.agent_{1}$\\
 leaf, $x$-th child of $v^-$ & unique & $v^-.agent_x$ & $v^-.agent_{x+1}$ \\
 \hline
\end{tabular}
}
\caption{Transforming a tree into a virtual ring}
\end{center}
\vspace{-1.2cm}
\end{table}

\underline{Label Forwarding Cycle (LFC):} $p.agent$ simultaneously sends its label to the \textit{pre--node} on the ring and pulls the label, $l$, of the \textit{next--node}, on the ring. When some $p'.agent \neq p.agent$ holds two labels, it forwards the oldest label to the pre--node. When $p'.agent$ holds one label but the pre--node does not hold any label, $p'.agent$ forwards the label it holds to the pre--node. In all other cases, particles do nothing. While $l$ is not the \textit{target--label}, $p.agent$ waits until the next--node gets a new label and then pulls the label from the next--node and forwards its current label to the pre--node. When $l$ is the target--label, $p.agent$ sends a termination--message around the tree through its pre--node. Each particle that receives the termination--message, forwards it if it holds one label or holds it while it holds more than one labels. When the termination--message returns to $p.agent$, the label forwarding procedure finishes. This is described in Procedure \ref{proc:lfc}.

\begin{procedure}[ht]
    \caption{LFC()}
    \label{proc:lfc}

    States: \{\textbf{\small Initiator, NotInitiator}, Termination\}\\
    \underline{In state: \texttt{Initiator}}\\
        \quad \textsc{MoveLabels} ( 
        \textit{received \emph{root--label}}: \texttt{Termination}; \\ 
        \hspace{3.2cm}\textit{next--node holds a label}: send own label to  pre--node and pull \\
        \hspace{7.2cm} label from next--node ) \\
        
    \underline{In state: \texttt{NotInitiator}} \\
        \quad \textsc{MoveLabels} ( \textit{hold two labels}: send oldest label to pre--node; \label{line:label-ordering} 
        \\
        \hspace{3.05cm} \textit{pre--node has no label}: send own label to pre--node; \\
        \hspace{3.05cm} \textit{received termination--message}: \texttt{Termination} ) \\
        
    \underline{In state: \texttt{Termination}} \\
        \quad \textsc{MoveLabels} (\textit{ hold \emph{root--label} and \emph{termination--message}}: \texttt{Ready}; \\
        \hspace{3.05cm} \textit{hold \emph{root--label}}: send termination--message to pre--node;\\
        \hspace{3.05cm} \textit{hold termination--message}: send termination--message to \\
        \hspace{7.65cm} pre--node, \texttt{NotActive}  ) \\
\end{procedure}

The label of an agent $r$ at distance $i$ from $p.agent$ in the ring reaches $p.agent$ once each particle between $r$ and $p.agent$ has performed $2i$ steps. This is because $i-1$ particles between $r$ and $p.agent$ need to forward their label before $r$ sends its own and $i$ more steps are needed to reach $p.agent$.

\underline{\textbf{Tree Comparisons:}} Let $C$ be a grey component. Initially, all incoming and outgoing DBEs are marked as \textit{not--compared}. $C$ marks as \textit{chosen} the first outgoing DBE $uv$ encountered in the cyclic--DFS traversal of the tree encoding $C$ that is not already marked as \textit{compared} if one exists, such that $u \in C$ and $v \in C'$, where $C'$ is also a component. Of the incoming DBEs that are marked as chosen by components at distance one, $C$ marks as \textit{chosen} the first one that is encountered in the cyclic--DFS traversal of the tree encoding it, if one exists. If there is at least one incoming edge, $C$ makes a comparison on the incoming edge it selected. If the outgoing edge selected by $C$ is also selected by a neighboring component as an incoming edge, $C$ makes a comparison on this edge. Thus, $C$ performs at most two comparisons at the same time. It is possible that no outgoing or incoming DBEs exist or that all outgoing DBEs are already marked as \textit{compared}, so $C$ does not make any comparisons. When $C$ selects an incoming DBE for a comparison it messages all other incoming DBEs selected by neighbouring components that they do not participate in the current competition and the corresponding edges remain \textit{not--compared}. The two endpoints of each edge on which a comparison is performed also encode being part of an edge participating in a comparison as part of their label. For example a label ECEDEP is changed to ECE\textbf{\underline{D}}EP as long as the DBE is used for a comparison. When the edge is no longer used, the label is changed back to ECEDEP. The selection of competition edges is also described in Procedure \ref{proc:competition-edges}, with respect to components. 

\begin{procedure}[ht]
    \caption{CompetitionEdges()}
    \label{proc:competition-edges}

    States: \{\textbf{\small OutgoingEdge}, IncomingEdge, BeginCompetition\} \\

    \underline{In state: \texttt{OutgoingEdge}} \\
    \quad \textsc{ChoosingOutgoingDBE} ( \textit{no chosen outgoing DBE}: mark the first outgoing \\ 
    \hspace{2cm} DBE that is \emph{not--compared} in cyclic--DFS as \emph{chosen}, \texttt{IncomingEdge} ) \\ 

    \underline{In state: \texttt{IncomingEdge}} \\
    \quad \textsc{ChoosingIncomingDBE} ( \textit{no chosen incoming DBE}: mark the first incoming \\
    \hspace{2cm} DBE that is \emph{chosen} in cyclic--DFS as \emph{active}, \texttt{BeginCompetition} ) \\ 

    \underline{In state: \texttt{BeginCompetition}} \\
    \quad \textsc{ChoosingCompetitionEdges} ( \\
    \qquad \textit{endpoint to the \emph{active} incoming DBE in cyclic--DFS}: \texttt{Initiator}; \label{line:incoming-DBE} \\
    \qquad \textit{\emph{chosen} outgoing DBE was marked as \emph{active} by neighbouring component}: \label{line:outgoing-DBE} \\
    \hspace{2cm} the endpoint of the chosen outgoing DBE goes to state \texttt{Initiator}, \\
    \hspace{2cm} all other particles go to state \texttt{NotInitiator} ) 
\end{procedure}

Let $p_{db}$ be the endpoint of a DBE selected for a competition. $p_{db}.agent_1$ initiates an LFC, using \textit{root--label} as the target label. If two comparisons are made simultaneously, an LFC is done separately but in parallel for the incoming and the outgoing comparisons. Each node in the tree holds two copies of the labels, one for each of the forwarding procedures. When initiating the LFC, $p_{db}.agent_1$ marks whether the forwarding cycle it initiates is for the comparison on an incoming DBE or an outgoing DBE. When the LFC finishes, $p_{db}$ marks itself as \textit{ready} and waits for the remaining endpoint of the DBE it is competing on, $p'_{db}$, to also mark itself as \textit{ready}. When $p_{db}$ and $p_{db}'$ are both marked as \textit{ready}, they each restart the LFC mechanism and compare each of the labels they receive during LFC to the label received by the other endpoint of the DBE, as follows. If the trees differ in the label of some particle, for the first pair of particles with different labels, take the first position of the label in which a difference is detected, $j$. We arbitrarily define the ordering $P > C > D > \textbf{\underline{D}} > E > N$. We say that the component that has the smaller symbol in position $j$, \textit{loses} the comparison, or equivalently, that it is a \textit{losing} component. If the trees of the compared components have the same labels during the comparison but one tree has more nodes, the component with less particles \textit{loses}. Each competing component waits until all comparisons in which it participates finish before participating in a merge. Every node in the tree knows the number of comparisons the component participates in due to the LFCs. When a particle incident to a DBE finishes a comparison, it sends a message to the root through its parent with the result of the comparison. This is presented in Procedure \ref{proc:tree-comparisons}. A component, $C$, that loses one comparison to another component $C'$, merges with $C'$ using the procedure in ``Merging Components''. If $C$ loses two comparisons, it picks the component reached by the outgoing DBE it lost to, call it $C'$, and only merges with $C'$ using the procedure in ``Merging Components''. Each edge on which either a merge is performed or the components are detected to be equivalent is marked as \textit{compared}. When two components merge, the DBE on which the merge was performed becomes \textit{compared} and all DBEs that were marked as compared in a previous round become \textit{not--compared}. When an endpoint of an edge is marked as \textit{not--compared}, the remaining endpoint of the edge also marks the edge as \textit{not--compared}. 

\begin{procedure}[ht]
    \caption{TreeComparisons()}
    \label{proc:tree-comparisons}

    States: \{\textbf{\small Ready}, Compete, \textbf{\small NotActive}\} \\

    \underline{In state \texttt{Ready}:} \label{line:ready} \\
    \quad \textsc{SynchronizeOnDBE} ( \textit{neighbour on DBE is in state} \texttt{Ready}: \texttt{Compete}) \label{line:neigh-ready} \\

    \underline{In state: \texttt{NotActive}} \\
        \quad \textsc{MoveLabels} ( \\
        \qquad \textit{hold two labels}: send oldest label to previous--node; \\
        \qquad \textit{previous--node does not hold a label}: send own label to previous--node; \\
        \qquad \textit{receive \emph{participating--in--merge} message}: \texttt{InMerge}; \\
        \qquad \textit{receive \emph{merge--complete} message}: \texttt{MergeComplete} ) \\

    \underline{In state \texttt{Compete}:} \\
    \quad \textsc{CompareComponents} (\\
    \qquad \textit{next--node holds a label}: send own label to previous--node and pull label from \\
    \hspace{4.8cm} next--node; \\
    \qquad \textit{own label $>$ competing neighbour label}: \label{line:comp-start} \\
    \hspace{7.4cm} \texttt{WinningMerge}; \\
    \qquad \textit{own label $<$ competing neighbour label}:\\
    \hspace{7.4cm} \texttt{LosingMerge}; \\
    \qquad \textit{own label = competing neighbour label \emph{AND} \\
    \qquad own label is last}: \\
    \hspace{7.4cm} \texttt{LosingMerge}; \\
    \qquad \textit{own label = competing neighbour label \emph{AND} \\
    \qquad competing neighbour label is last}:\\
    \hspace{7.4cm} \texttt{WinningMerge}; \label{line:comp-end} \\
    \qquad \textit{own label = competing neighbour label \emph{AND} \label{line:cont-compete}\\
    \qquad own label is not last \emph{AND}\\
    \qquad competing neighbour label is not last}: \\
    \hspace{7.4cm} \texttt{Compete} );
\end{procedure}

\underline{\textbf{Merging Components:}} After a component $C'$ loses, it merges with the component that won, $C$. Let $uv$, such that $u \in C$ and $v \in C'$, be the DBE on which the merge is executed. The particle at $v$ initiates the merge by sending a message to its parent. The merging message is moved through the tree by each particle marking itself as in--merge and forwarding the message to its parent until the root is reached. When the root receives the merging message, it sets its state to \textit{follower}, marks its in--merge neighbour as its parent by switching its corresponding port in the label to $P$ and sends the merging message back to the in--merge neighbour. Each in--merge particle that receives the merge--message for a second time, marks the node from which it received the message as its child (i.e., switches $P$ to $C$ in its label), marks its remaining in--merge neighbour as its parent by switching its corresponding port in the label to $P$, forwards the merge--message to its in--merge neighbour and unmarks itself as in--merge. When $v$ receives back the merge--message it marks the node from which it received the message as its child and $u$ as its parent in its label. Finally, when $v$ changes its label to mark $u$ as a parent (i.e., $v$ changes \underline{\textbf{D}} in its label to $P$), $u$ also marks $v$ as its child (i.e., $u$ changes \underline{\textbf{D}} in its label to $C$), the merge is complete and the new component $CC'$ is ready to begin a new comparison. When the merge is complete, $u$ informs the root by sending a merge--complete message to its parent. The merge--complete message is then forwarded by each particle that receives it to its parent until the root is reached. After a merge, $CC'$ marks all endpoints of DBEs (i.e., particles with $D$ in their label) as ``finished'' and waits until the remaining endpoint for each of the DBEs is also marked as finished. When all neighbouring components have finished the current comparison, $CC'$ begins executing ``Tree Traversal'' again, until no more edges marked as not--compared remain. While $CC'$ only has \textit{compared} edges it waits until an edge becomes not--compared as described in ``Tree Comparisons''. This is also shown in Procedure \ref{proc:restart-competition}. A component participating in a merge as a losing (resp. winning) component might be simultaneously participating in a merge on a different DBE as a winning (resp. losing or winning) component. A component participates in at most one merge as \textit{losing}, so the exchanged messages in the above procedure are unique. Additionally, a component simultaneously participating in two merges as \textit{winning} does so only through the two particles incident to the chosen DBEs and the resulting changes in the tree are local changes in the respective positions of the labels of those particles. It is possible that a component consisting of only one particle that is only connected to components consisting of one particle each, is a \textit{winning} component in both merges. In this case, the single particle of the component implements both winning merges as described above, by recording the merges in different positions of its own label. If the DBE neighbour of a winning component marks itself as \textit{finished} instead of merging with the winning component, it is inferred that the losing component merged on a different DBE. The particle of the winning component incident to the DBE on which a comparison is made also marks itself as \textit{finished}, marks the DBE as \textit{not--compared} and informs the root that the merge is complete through its parent. Finally, a component participating in two simultaneous merges both as losing and as winning, participates in both merges independently since even if the particle changing its label for the winning merge participates in the losing merge, that must be through different ports and as a result, different parts of its label are changed for each of the merges. The pseudocode corresponding to the merging of components is shown in Procedure \ref{proc:merge-components}. 

\begin{procedure}[ht]
    \caption{MergeComponents()}
    \label{proc:merge-components}

    States: \{{\small \textbf{WinningMerge}, \textbf{LosingMerge}, \textbf{InMerge}}\} \\
    
    \underline{In state \texttt{WinningMerge}:} \\
    \quad wait for all competitions in the component to finish \\
    \quad \textsc{MergeInitiator} ( 
    \textit{DBE neighbour marks this particle as a parent}: \label{line:marked-as-parent} \\ 
        \hspace{5.5cm} mark the DBE neighbour as a child, \label{line:mark-child} \\
        \hspace{5.5cm} send merge--complete message to parent, \\ 
        \hspace{5.5cm} \texttt{MergeComplete} ) \\

    \underline{In state \texttt{LosingMerge}:} \\
    \quad wait for all competitions in the component to finish\\
    \quad send in--merge message to parent \\
    \quad \textsc{MergeInitiator} ( 
    \textit{receive \emph{abort--merge} message from root}: \texttt{NotActive}; \\
    \hspace{3.73cm} \textit{receive \emph{in--merge} message}: \texttt{MergeComplete} ) \\

    \underline{In state \texttt{InMerge}: } \\
    \quad \textsc{InLosingMerge} ( \\
    \hspace{.7cm} wait for all competitions in the component to finish\\
    \hspace{.7cm} \textit{is the root \emph{AND} received one \emph{in--merge} message}: \\
        \hspace{1.7cm} mark the child that forwarded the in--merge message as  the parent, \label{line:root-gets-parent} \\
        \hspace{1.7cm} send the in--merge message to the parent, \\
        \hspace{1.7cm} \texttt{MergeComplete}; \\
    \hspace{.7cm} \textit{is the root \emph{AND} received two \emph{in--merge} messages}: \label{line:lose-two-condition} \\ 
        \hspace{1.7cm} mark the child that forwarded the \emph{in--merge} message for the outgoing \label{line:root-gets-parent-too}\\
        \hspace{2cm} competition as the parent, \\
        \hspace{1.7cm} send the \emph{in--merge} message to the parent, \\
        \hspace{1.7cm} send \emph{abort--merge} message to particle in incoming DBE, \label{line:abort-merge} \\
        \hspace{1.7cm} \texttt{MergeComplete}; \\
    \hspace{.7cm}\textit{ is not the root}: \\
         \hspace{1.7cm} mark the parent as a child, \\
         \hspace{1.7cm} mark the child that forwarded the in--merge message as the parent, \\
         \hspace{1.7cm} send the in--merge message to the parent, \\
         \hspace{1.7cm} \texttt{MergeComplete} ) \\

\end{procedure}

\begin{procedure}[h]
    \caption{RestartCompetition()}
    \label{proc:restart-competition}

    States: \{\textbf{\small MergeComplete}\} \\

    \underline{In state \texttt{MergeComplete}:} \\
    \quad \textsc{RestartingCompetition} ( \\ 
    \qquad \textit{no \emph{not--compared} edges exist in the component}: \texttt{MergeComplete}; \\
    \qquad \textit{\emph{not--compared} edges exist in the component}: \texttt{OutgoingEdge} ) \\
    
\end{procedure}

\subsubsection*{Correctness}

We show that the encoding of a competing component is not altered during the execution of LFC (Lemma \ref{lem:moving-labels-around-tree}). This ensures that each of the particles incident to an active DBE of a component receives the same encoding during a comparison. A comparison between two components always terminates (Lemma \ref{lem:tree-comparisons-correctness}) and the result of a comparison is always \textit{win}, \textit{lose} or \textit{draw} (Lemma \ref{lem:merging-two-components}). In Lemma \ref{lem:tree-comparisons-correctness} we further show that a component wins on a DBE if and only if the second component competing on the same edge loses. Similarly, if a component draws on an edge the other component competing on the same edge also draws. Next, we show that progress is always ensured during the execution of the algorithm since either the number of competing components decreases or the number of DBEs on which comparisons can be performed decreases. Eventually no DBEs on which a competition can be performed exist and when this happens only one component with one leader remains (Theorem \ref{th:leader-elected}).  

\begin{lemma} \label{lem:moving-labels-around-tree}
    Let $T$ be a tree encoding a component. The information held by any particle $p \in T$, can be moved around the tree until any other particle, $p' \in T$, is reached and the order of information held by neighbouring particles in the tree is preserved, using the procedure described in ``Tree Traversal''.  
\end{lemma}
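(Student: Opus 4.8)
The plan is to view the ``Label Forwarding Cycle'' (Procedure~\ref{proc:lfc}) as a pipelined, unidirectional token-passing system on the virtual ring that cyclic--DFS associates with $T$. I would first fix an enumeration $a_1,\dots,a_N$ of the agents in Euler-tour order, so that, consistently with the pre--node/next--node table, the pre--node of $a_k$ is $a_{k-1}$ and its next--node is $a_{k+1}$ with indices modulo $N$. Under this convention every emission moves a single label from an agent to its predecessor, so all labels travel in one direction around the ring, and the initiator $p.agent = a_k$ is the unique agent whose move is the atomic ``push own label to $a_{k-1}$ and pull the label of $a_{k+1}$''. A first easy observation to record is that this atomic move sends a one-label state to a one-label state, so, as $p.agent$ starts holding its own label, it holds exactly one label throughout; this makes $p.agent$ a fixed observation point past which all other labels must flow.

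The core of the argument is a single invariant on configurations, proved by induction over the activations of the fair asynchronous scheduler: \textbf{(i)} conservation, i.e. the multiset of labels currently present is exactly the initial one, each of the $N$ labels appearing once, so nothing is lost or duplicated; \textbf{(ii)} a bounded buffer, i.e. no agent ever holds more than two labels; and \textbf{(iii)} a FIFO discipline, i.e. every non-initiator agent releases labels to its predecessor oldest-first. Property~\textbf{(iii)} is read off the first guard of state \texttt{NotInitiator}, and \textbf{(i)} holds because each admissible move merely relocates one label from an agent to its predecessor (the initiator simultaneously exchanging one out for one in). The content is~\textbf{(ii)}: I would argue that the emission guards (``forward the oldest when holding two'' and ``forward when the predecessor holds none'') are meant to hand a label to an agent only when it has a free slot, an emission being effective only when the bounded-memory target can store it, so the per-agent load never reaches three.

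Granting the invariant, both claims of the lemma follow. \emph{Order preservation} is immediate from unidirectional flow plus the FIFO discipline of~\textbf{(iii)}: no label can overtake another, so the cyclic sequence of labels is rigid; hence the agents, and therefore the tree particles they encode, are seen in exactly their Euler-tour order, and since neighbouring particles of $T$ occupy a fixed relative position in that tour their order is preserved. \emph{Reachability}---that the information of any particle is transported around the ring until any chosen agent receives it---I would obtain from the step count already noted after Procedure~\ref{proc:lfc}: the label of the agent at ring-distance $i$ from $p.agent$ arrives once the $i-1$ intervening agents have each forwarded and $i$ further steps take place, i.e. after $O(i)$ relevant activations. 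As the scheduler activates every particle infinitely often, these finitely many activations all occur in finite time, so every label, in particular the target--label of $p'$, reaches $p.agent$ and the cycle terminates.

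The step I expect to be the main obstacle is the bounded-buffer part~\textbf{(ii)} under an arbitrary fair interleaving, since the pumping initiator and the forwarding agents act concurrently: I must rule out that some adversarial ordering delivers a label to an agent that already holds two, either by the initiator re-firing into a full predecessor or by a two-label agent forwarding into a full predecessor. The crux will therefore be a careful case analysis on the activated agent and the loads of its two ring-neighbours, showing that a full agent is never an effective emission target and that the ``extra label'' and the ``empty slot'' created in pairs around the initiator drift in opposite directions and annihilate when they meet rather than collide, so that neither overflow nor reordering is ever possible.
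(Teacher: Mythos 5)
Your core argument is the one the paper itself uses: the cyclic--DFS ring gives every agent exactly one predecessor and one successor, labels flow unidirectionally, the oldest--first rule gives a per--agent FIFO discipline, and order preservation plus (via fairness and the step count) reachability follow. The paper's proof of Lemma~\ref{lem:moving-labels-around-tree} is precisely this observation, stated in a few sentences, so structurally you have not diverged from it at all.

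The one place where you go beyond the paper is your invariant \textbf{(ii)} (no agent ever holds more than two labels), and that is also where your proposal breaks. First, \textbf{(ii)} is not needed: by your own final derivation, order preservation uses only unidirectional flow and the FIFO property \textbf{(iii)}, and reachability uses only conservation \textbf{(i)} and fairness; buffer occupancy enters nowhere. Second, as stated it is false for Procedure~\ref{proc:lfc} as written. The initiator's guard checks only that its next--node holds a label, and the two--label rule of a non--initiator checks nothing about its pre--node, so a fair asynchronous scheduler admits the following interleaving: the initiator pumps (its pre--node now holds two labels, its next--node none); the agent two positions ahead refills the hole at the next--node; the initiator is activated again \emph{before} its pre--node and pumps a second time, so the pre--node now holds three labels. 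Iterating this (the pre--node releases only one label per activation while refills keep arriving at the next--node) drives that buffer to a load of order $N$ before fairness forces it to drain. Your suggested repair --- declaring an emission ``effective'' only when the target has a free slot --- is a change of semantics that is not in the procedure, and it would create a new obligation you do not address, namely deadlock--freedom of the gated sends. The right move is simply to drop \textbf{(ii)}: the lemma's two claims survive on \textbf{(i)}, \textbf{(iii)}, unidirectionality and fairness alone, which is exactly the content of the paper's own proof (note that bounded buffers are a constant--memory concern about the model, not something this lemma asserts). What you should not do is leave \textbf{(ii)} flagged as ``the crux'': the case analysis you defer to cannot succeed under the stated push semantics, so as proposed, your proof rests on a conjunct that is both superfluous and unprovable.
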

\begin{proof}
    In Tree Traversal, each node $v$ of the tree is split into \emph{number of children}$ + 1$ agents, where agent$_1$ denotes the first time $v$ is visited in cyclic--DFS and agent$_i$ for $i > 1$ denotes that $i - 1$ children of $v$ have been visited in cyclic--DFS. Each of the agents corresponds to a node in the equivalent ring, as shown in Figure \ref{fig:tree-traversal}. Due to the resulting ring, each agent has exactly one predecessor and one successor so it receives and forwards information, to the same agents. A label received at some step is not forwarded before a label received at an earlier step and as a result the order of labels in the ring is preserved.  
\end{proof}

\begin{lemma} \label{lem:tree-comparisons-correctness}
    Let $T_1$ and $T_2$ be the tree encodings of two components connected by at least one DBE \emph{$p_{db}p_{db}'$} such that $p_{db} \in T_1$ and $p_{db}' \in T_2$. Using the procedure described in ``Tree Comparisons'' $p_{db}$ and $p'_{db}$ learn whether the encodings of $T_1$ and $T_2$ are the same after at most $\mathcal{O}(\min\{|T_1|, |T_2|\})$ comparisons.
\end{lemma}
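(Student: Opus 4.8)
The plan is to prove correctness and the complexity bound separately, using Lemma~\ref{lem:moving-labels-around-tree} to reduce each tree to a canonical label sequence and then analysing the lock-step comparison of Procedure~\ref{proc:tree-comparisons}.

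First I would fix notation. By Lemma~\ref{lem:moving-labels-around-tree}, when $p_{db}.agent_1$ (resp.\ $p'_{db}.agent_1$) initiates an LFC with target \emph{root--label}, the labels of $T_1$ (resp.\ $T_2$) arrive at the initiating agent in the order induced by the cyclic--DFS traversal and that order is preserved throughout the forwarding. Hence each endpoint obtains a well-defined word $\sigma_1$ (resp.\ $\sigma_2$) of labels that canonically encodes its tree, read from the aligned DBE endpoints, and $T_1,T_2$ have the same encoding exactly when $\sigma_1=\sigma_2$ as words (content and length). I would also record that the ring encoding of a tree on $m$ nodes is its Euler tour and so has exactly $2m-1$ agents, giving $|\sigma_i|=2|T_i|-1$.

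Next I would argue correctness of the comparison. The handshake in state \texttt{Ready} (\textsc{SynchronizeOnDBE}, line~\ref{line:neigh-ready}) guarantees that neither endpoint enters state \texttt{Compete} before the other, so both begin reading their words from the same index. In state \texttt{Compete}, an endpoint pulls its next label only after its current label has been compared against the competing neighbour's current label and the pair was found equal and non--final (the continuation case, line~\ref{line:cont-compete}); under the fair asynchronous scheduler this ``wait until the next--node holds a label'' semantics keeps the two endpoints in phase, so for every $i$ the $i$--th symbol of $\sigma_1$ is compared against the $i$--th symbol of $\sigma_2$. I would then run the case analysis of lines~\ref{line:comp-start}--\ref{line:comp-end}: by the total order $P>C>D>\underline{D}>E>N$, at the first differing position the endpoint holding the strictly smaller symbol moves to \texttt{LosingMerge} and the other to \texttt{WinningMerge}; if all compared symbols agree but one word reaches its final (\emph{last}) agent first, the shorter tree loses. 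An induction on the compared index then shows $p_{db}$ and $p'_{db}$ reach opposite win/lose verdicts exactly when $\sigma_1\neq\sigma_2$ and otherwise both detect equality, so they correctly learn whether the encodings coincide (the consistency of the resulting win/lose/draw outcome across the edge being the separate concern of Lemma~\ref{lem:merging-two-components}).

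Finally, for the complexity bound I would observe that the comparison advances by exactly one symbol per synchronized step and halts at the first differing position or as soon as the shorter word is exhausted, hence after at most $\min\{|\sigma_1|,|\sigma_2|\}=2\min\{|T_1|,|T_2|\}-1=\mathcal{O}(\min\{|T_1|,|T_2|\})$ comparisons. The step I expect to be the main obstacle is the in--phase claim: under an asynchronous scheduler I must rule out the two endpoints drifting to different indices of their words, which requires combining the \texttt{Ready} handshake with the rule that a label is pulled only when the next--node holds a label and the current comparison resolves to \texttt{Compete}, together with the timing estimate following Procedure~\ref{proc:lfc} that an agent at ring--distance $i$ is delivered only after $2i$ forwarding steps. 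A secondary subtlety is the symmetric tie at the final agent for identical encodings, where I must verify that the ordered evaluation of the equality conditions in lines~\ref{line:comp-start}--\ref{line:comp-end} yields the intended ``no difference detected'' outcome rather than an inconsistent verdict.
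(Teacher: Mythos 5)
Your proposal is correct and follows essentially the same route as the paper's proof: both reduce the comparison to the cyclic--DFS label order guaranteed by Lemma~\ref{lem:moving-labels-around-tree}, use the \texttt{Ready} handshake to align the two endpoints, and argue that the lock-step comparison halts at the first differing label or when the shorter encoding is exhausted. Your write-up is in fact more explicit than the paper's on the $\mathcal{O}(\min\{|T_1|,|T_2|\})$ count (via the $2|T_i|-1$ ring length) and on the symmetric tie at the final agent, both of which the paper's proof leaves implicit.
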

\begin{proof}
    Let $r_1$ be the root of $T_1$ and $r_2$ be the root of $T_2$. As we have shown in Lemma \ref{lem:moving-labels-around-tree} information in the tree can be moved between any two particles in the tree. In the first step of ``Tree Traversal'', when the label marked as root--label reaches $p_{db}$ (resp. $p'_{db}$) the label forwarding stops. Since $p_{db}$ and $p'_{db}$ wait until both particles are marked as \textit{ready} to initiate ``Tree Comparisons'', when ``Tree Comparisons'' starts $p_{db}$ and $p'_{db}$ hold the root--label of $r_1$ and $r_2$ respectively. Since we know from Lemma \ref{lem:moving-labels-around-tree} that the order of labels in the tree is preserved and $p_{db}, p'_{db}$ always obtain, forward and compare labels in the same order, the compared labels at each step, belong to equivalent nodes in $T_1$ and $T_2$ if $T_1, T_2$ have the same encoding. Since all labels are compared in the same order and the comparison in ``Tree Comparisons'' stops only if a difference in a label or on the tree size is detected, $p_{db}$ and $p'_{db}$ have the same information which means that they must also make the same decision about the result of the comparison. 
\end{proof}

\begin{lemma} \label{lem:merging-two-components}
    Let two components $C_1$ and $C_2$ be connected by at least one DBE $uv$. Merging $C_1$ and $C_2$ on $uv$ using ``Merging Components'' creates one component, $C_1C_2$. The encoding of $C_1C_2$ is still a tree and exactly one potential leader, the root of $C_1C_2$, remains on the merged tree. 
\end{lemma}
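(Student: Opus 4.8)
The plan is to show that the merge operation is precisely a \emph{re-rooting} of the losing component's tree at its DBE endpoint, followed by attaching that endpoint as a child of the corresponding endpoint in the winning component. Throughout I assume, as an invariant established by the Tree Construction phase (Section~\ref{sec:tree-construction}) and maintained inductively by this very lemma, that $C_1$ and $C_2$ are each encoded as rooted trees, with roots $r_1$ and $r_2$. Since $u \in C_1$ is the winning endpoint and $v \in C_2$ is the losing endpoint, I will argue that $r_1$ becomes the unique root of $C_1C_2$. First I would dispatch connectivity: $C_1$ and $C_2$ are connected (being trees), and the merge adds the single edge $uv$ between them, so the underlying undirected graph of $C_1C_2$ is the disjoint union of two trees joined by one edge, hence itself a tree.

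Next I would trace the \textit{in--merge} message to pin down exactly which parent pointers change. Let $v = w_0, w_1, \dots, w_m = r_2$ be the path from $v$ to $r_2$ in $C_2$, where each $w_{i+1}$ is the parent of $w_i$. The message travels up this path (each $w_i$ marking itself \textit{in--merge}); then $r_2$ reverses its orientation and becomes a \textit{follower}; then the message returns down the path. By the rule that each \textit{in--merge} particle marks the sender as a child and its remaining \textit{in--merge} neighbour as its parent, every edge $w_iw_{i+1}$ on the path has its orientation reversed, $v$ finally takes $u$ as parent, and $u$ marks $v$ as a child. I would stress that these $P\!\leftrightarrow\!C$ label switches are the \emph{only} changes: no pointer outside the path is touched, so every node of $C_1$ and every off-path node of $C_2$ keeps its original parent, and the label encoding stays consistent with the abstract orientation.

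The key step is then to conclude that the resulting parent function is a valid rooted tree with root $r_1$. I would argue by counting orientations: each edge of $C_1$ keeps its single orientation, each off-path edge of $C_2$ keeps its single orientation, each of the $m$ path edges is reversed (still oriented exactly once), and the new edge $uv$ is oriented $v \to u$. Thus every edge of the undirected tree carries exactly one parent pointer, and every node except $r_1$ has exactly one parent (in particular $r_2 = w_m$ now has parent $w_{m-1}$, so it is no longer a root). Since the underlying undirected graph is acyclic, any directed cycle of parent pointers would induce an undirected cycle and is therefore impossible; hence following parents from any node is a well-defined path of strictly decreasing length that must terminate at the unique parentless node $r_1$. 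Therefore $C_1C_2$ is a tree rooted at $r_1$.

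Finally, for the leader count I would note that the procedure switches $r_2$ to \textit{follower} while leaving $r_1$ untouched, so exactly one potential leader, namely the root $r_1$, remains; this gives the last claim. The main obstacle is the bookkeeping of the second step: carefully matching the message-passing rules to the assertion that \emph{precisely} the path edges $w_0w_1,\dots,w_{m-1}w_m$ are reversed and nothing else, since an off-by-one slip in the ``mark sender as child / mark remaining neighbour as parent'' rule would spoil the orientation count and hence the acyclicity argument. Once that correspondence is nailed down, the tree and unique-root conclusions follow cleanly from the edge count and the acyclicity of the underlying graph.
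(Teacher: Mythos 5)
Your proof is correct and takes essentially the same route as the paper's: both view the merge as a re-rooting (reversal of the path from $v$ to $r_2$) of the losing tree followed by attaching $v$ as a child of $u$, conclude the tree property from the fact that a single edge joins two previously disjoint trees without creating a cycle, and obtain the unique potential leader because $r_2$ switches to \emph{follower} while $r_1$ is untouched. Your orientation-counting step simply makes explicit what the paper asserts in one line, namely that reversing a path of consecutive parent--child edges leaves $T_2$ a (re-rooted) tree.
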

\begin{proof}
    Call the tree encoding of the components $C_1$ and $C_2$, $T_1 = (V_1,E_1)$ and $T_2 = (V_2,E_2)$ respectively. Since $C_1$ and $C_2$ are two distinct components, by definition $V_1 \cap V_2 = \emptyset$. Also, since we have taken two components $C_1$ and $C_2$ that are connected by a DBE, without loss of generality, take $uv$ such that $u \in V_1$ and $v \in V_2$ and say $C_1$ is the winning component. When a path of consecutive edges is reversed, $T_2$ remains a tree. Furthermore, since $v$ marks itself as a child of $u$, $T_2$ becomes connected to $T_1$ by a single directed edge. When $u$ marks $v$ as its child, $T_1$ also becomes connected to $T_2$ by the same directed edge. Since only one directed edge was added to two previously disconnected trees, a cycle cannot have been formed and the resulting graph is also a tree. Finally, since the root of $T_2$ becomes a follower and no other particle of $T_2$ becomes a leader in ``Merging Components'', only the root of $T_1$ remains at state \textit{leader} after the merge. Therefore, the combined graph $T_1T_2$ is a tree and contains only one leader after the merge.   
\end{proof}

Notice that since a component participates in at most two simultaneous comparisons the component might win both comparisons, lose one comparison and win one comparison, win or lose one comparison and draw on the other, draw in both comparisons or lose both comparisons. In the first three cases, the changes corresponding to different merges of the component are independent of each other. A component drawing in both competitions only marks edges as compared without changing the encoding of the component. When losing both comparisons, the root of the component only participates in one merge and the second merge is aborted. After a comparison between two components, either the components merge or they do not merge but the DBE on which the competition is made is marked as \textit{compared}. 

We define a \textit{rightmost DBE} $ab$ to be a DBE such that there does not exist a particle $b'$ that is an endpoint of another DBE with $x_{b'} > x_{b}$. We further define a \textit{topmost} DBE $ab$ to be a rightmost DBE such that there does not exist a $b'$ that is an endpoint of another rightmost DBE with $y_{b'} > y_{b}$. We call an edge connecting particles in the same component (resp. different components) \textit{internal} (resp. \textit{external}). A DBE where a comparison is being made is \textit{active}.

\begin{lemma} \label{lem:internal-db}
    Let $e$ be an internal DBE of some component $C$ and let $e'$ be a different DBE of $C$. For every comparison it is not possible that $e$ and $e'$ are both active. 
\end{lemma}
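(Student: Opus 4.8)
The plan is to use the fact, recorded in the description of ``Tree Comparisons'', that a component $C$ runs at most two comparisons at once: one on its unique active incoming DBE, and one on its unique chosen outgoing DBE (the latter only when the component across that edge selects it as \emph{its} active incoming DBE). I will show that an active internal DBE already occupies both of these slots. First I would fix the orientation of a DBE. Since ports $0$ (East) and $3$ (West) are common to all particles, every DBE $e=ab$ has a globally agreed tail and head: writing $a$ for its West endpoint and $b$ for its East endpoint, $e$ is encoded as outgoing at $a$ (port $0$ labelled $D$) and as incoming at $b$ (port $3$ labelled $D$). If $e$ is internal to $C$, both $a,b\in C$, so inside $C$ the edge $e$ is simultaneously an outgoing DBE (witnessed at $a$) and an incoming DBE (witnessed at $b$).

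Next I would show that an active internal $e$ consumes both of $C$'s comparison slots. A comparison on $e$ runs only if both endpoints engage in it, $a$ on its outgoing side and $b$ on its incoming side. From the role assignment in Procedure~\ref{proc:competition-edges}, an endpoint takes an \texttt{Initiator}/\texttt{NotInitiator} role on its incoming side only when the incident edge is $C$'s active incoming DBE, and on its outgoing side only when the incident edge is $C$'s chosen outgoing DBE that was marked active by the particle across it. For $e=ab$ the particle facing $a$ across $e$ is $b\in C$, so $a$ engaging on its outgoing side is precisely the event ``$C$ chooses $e$ as its chosen outgoing DBE and $b$ marks it active'', while $b$ engaging on its incoming side is precisely ``$C$ selects $e$ as its active incoming DBE''. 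As $C$ selects at most one edge of each kind, both singleton selections must coincide with $e$, so both comparison slots of $C$ are held by $e$. Since $C$ never runs more than these two comparisons simultaneously, no distinct DBE $e'$ of $C$ --- internal or external --- can be active while $e$ is.

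The step I expect to be the main obstacle is establishing the mutual pinning invoked in the previous paragraph, i.e.\ ruling out a ``one-sided'' self-comparison in which, say, the incoming slot is used by $e$ but the outgoing slot is used by a different edge. The key observation is that $e$ cannot be active unless its West endpoint $a$ acts on its outgoing side, and by the role rule this forces $C$'s chosen outgoing DBE to be $e$ itself; the symmetric statement at the East endpoint $b$ pins the active incoming DBE to $e$ as well. Making this double pinning precise directly from the state transitions of Procedure~\ref{proc:competition-edges}, so that an active internal edge provably ties up both the outgoing selection and the incoming selection of $C$, is the crux of the argument.
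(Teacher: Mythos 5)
Your proposal is correct and takes essentially the same approach as the paper's own proof: both arguments rest on the observation that an active internal DBE must simultaneously occupy the component's unique chosen-outgoing slot and its unique active-incoming slot (since the ``neighbouring component'' across an internal edge is $C$ itself), leaving no slot free for any other DBE to become active. The additional detail you supply --- the East/West orientation of DBEs via the common ports and the explicit role assignment in Procedure~\ref{proc:competition-edges} --- merely makes precise the double-pinning step that the paper states more briefly.
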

\begin{proof}
    At most one incoming and at most one outgoing DBE become active in each component during a comparison. So, $e$ must be chosen as the outgoing DBE, otherwise, it does not participate in the comparison. However, if $e$ is chosen as an outgoing DBE for $C$ it must either be chosen as an incoming edge too or it does not participate in the comparison. Therefore, either $e$ is only chosen as an outgoing edge and as a result it is not an active edge, or $e$ is chosen both as an incoming and an outgoing DBE but no other edges are chosen for the same comparison. In either case, it is not possible for both $e$ and any other edge $e'$ to be simultaneously active in $C$.
\end{proof}

\begin{figure}[t]
    \centering
    \begin{subfigure}[b]{0.24\textwidth}
        \centering
        \begin{tikzpicture} [decoration={snake,segment length=5mm,amplitude=.7mm},
   line around/.style={decoration={pre length=#1,post length=#1}}]

            \filldraw[rotate around={59:(-.2,-.15)},gray!15] (0,-1.7) node[]{\textcolor{black}{\small $C_1$}} ellipse (6pt and 1.4cm) ;

            \filldraw[rotate around={59:(-.15,-.2)},gray!15] (1.4,-1) node[]{\textcolor{black}{\small $C_2$}} ellipse (7pt and 1.4cm);

            \draw[blue,line width=.6mm] (0:0) --++ (0:2.5) ;
            
            \filldraw[black] (0:0) circle(2pt) ++ (0:2.5) circle(2pt) ;

        \end{tikzpicture}
        \caption{}
        \label{fig:main-lemma-DBE}
    \end{subfigure}
    \begin{subfigure}[b]{0.24\textwidth}
        \centering
        \begin{tikzpicture}[decoration={snake,segment length=5mm,amplitude=.7mm},
   line around/.style={decoration={pre length=#1,post length=#1}}]

            \filldraw[rotate around={59:(-.2,-.15)},gray!15] (0,-1.7) ellipse (6pt and 1.4cm) ;

            \filldraw[rotate around={59:(-.15,-.2)},gray!15] (1.4,-.9) ellipse (7pt and 1.4cm);

            \draw[blue,line width=.6mm] (0:0) --++ (0:2.5) ;
            
            \filldraw[black] (0:0) circle(2pt) ++ (0:2.5) circle(2pt) ;

            \filldraw[black] (90:1.5) circle(2pt) ++ (0:2.5) ++ (-90:3) circle(2pt) ;

            \draw[decorate,line around=5pt] (0:0) node[left] {$a$} --++ (-31:2.95) node[right] {$b_1$} node [midway,below left,draw=none] {$\pi_1$} ;

            \draw[decorate,line around=5pt] (0:0) ++ (90:1.5) node[left] {$a_2$} --++ (-30:2.95) node[right] {$b$} node [midway,above right,draw=none] {$\pi_2$} ;
        \end{tikzpicture}
        \caption{}
        \label{fig:main-lemma-paths}
    \end{subfigure}
    \begin{subfigure}[b]{0.24\textwidth}
        \centering
        \begin{tikzpicture}[decoration={snake,segment length=5mm,amplitude=.7mm},
   line around/.style={decoration={pre length=#1,post length=#1}}]

            \filldraw[rotate around={59:(-.2,-.15)},gray!15] (0,-1.7) ellipse (6pt and 1.4cm) ;

            \filldraw[rotate around={59:(-.15,-.2)},gray!15] (1.4,-.9) ellipse (7pt and 1.4cm);

            \draw[blue,line width=.6mm] (0:0) --++ (0:2.5) ;

            \draw[blue,line width=.6mm] (90:1.5) --++ (0:2.5) ;

            \draw[blue,line width=.6mm] (-90:1.5) --++ (0:2.5) ;
            
            \filldraw[black] (0:0) circle(2pt) ++ (0:2.5) circle(2pt) ;

            \filldraw[black] (90:1.5) circle(2pt) ++ (0:2.5) circle(2pt) ++ (-90:3) circle(2pt) ++ (180:2.5) circle(2pt) ;

            \draw[decorate,line around=5pt] (0:0) node[left] {$a$} --++ (-31:2.95) node[right] {$b_1$} node [midway,below left,draw=none] {$\pi_1$} ;

            \draw[decorate,line around=5pt] (0:0) ++ (90:1.5) node[left] {$a_2$} --++ (-30:2.95) node[right] {$b$} node [midway,above right,draw=none] {$\pi_2$} ;
            
        \end{tikzpicture}
        \caption{}
        \label{fig:main-lemma-three-DBEs}
    \end{subfigure}
    \begin{subfigure}[b]{0.24\textwidth}
        \centering
        \begin{tikzpicture}[decoration={snake,segment length=5mm,amplitude=.7mm},
   line around/.style={decoration={pre length=#1,post length=#1}}]

            \draw[blue,line width=.6mm] (0:0) --++ (0:2.5) ;
   
            \filldraw[black] (0:0) circle(2pt) ++ (0:2.5) circle(2pt) ++ (-90:1.5) circle(2pt) ++ (180:2.5) ++ (-90:.5) circle(2pt)  ;

            \draw[decorate,line around=5pt] (0:0) node[left] {$a_t$} --++ (-31:2.95) node[right] {$b_j$} node [midway,below left,draw=none] {} ;

            \draw[decorate,line around=5pt] (0:2.5) node[right] {$b_t$} --++ (219:3.2) node[left] {$a_i$} node [midway,below left,draw=none] {} ;
        \end{tikzpicture}
        \caption{}
        \label{fig:main-lemma-crossing-paths}
    \end{subfigure}
    \caption{Visualisation of the setting described in the proof of Lemma \ref{lem:final-config-DBEs-internal}}
    \label{fig:main-lemma}
\end{figure}
\begin{lemma} \label{lem:final-config-DBEs-internal}
    All DBEs in the system are internal in the final configuration, that is, the final configuration consists of only one component.
\end{lemma}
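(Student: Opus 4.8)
The plan is to argue by contradiction: assume that the final configuration still contains at least two components, and hence at least one \emph{external} DBE. Among all external DBEs I would single out the topmost one, say $a_tb_t$ (with right endpoint $b_t$ rightmost, and then of maximal $y$), which exists by the maximality definitions stated above. The first fact to record is that in a final configuration no competition can ever end in a \emph{win} or a \emph{lose}: by Lemma~\ref{lem:merging-two-components} such a result triggers a merge that turns two components into one, strictly decreasing the number of components and contradicting finality. Hence every comparison the algorithm sets up between two distinct components must terminate in a \emph{draw}. So the whole statement reduces to showing that the topmost external DBE $a_tb_t$ cannot draw.

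Next I would translate a draw into a geometric statement. By Lemma~\ref{lem:tree-comparisons-correctness} (together with Lemma~\ref{lem:moving-labels-around-tree}, which guarantees the encodings are transported around each tree without reordering), a draw on $a_tb_t$ means the two incident components $C_a\ni a_t$ and $C_b\ni b_t$ produce identical cyclic--DFS encodings of the same length, with the active--DBE symbol \underline{\textbf{D}} occurring in matching positions. Because ports $0$ and $3$ are globally fixed (East/West) while the remaining ports are determined only up to chirality, identical encodings force a direction--preserving congruence \(\phi\colon C_a\to C_b\) (a translation, or a reflection along a horizontal axis), and the \underline{\textbf{D}} marking forces \(\phi\) to send the \emph{outgoing} marked endpoint of one tree to the outgoing marked endpoint of the other, and the \emph{incoming} endpoint to the incoming endpoint. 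Tracking images then gives the configuration of Figure~\ref{fig:main-lemma-paths}: since \(a_t\) carries the outgoing marked DBE, its counterpart \(b_1=\phi(a_t)\in C_b\) carries an outgoing DBE pointing East to \(b_1+(2,0)\); symmetrically the counterpart \(a_2=\phi^{-1}(b_t)\in C_a\) of \(b_t\) carries an incoming DBE, and \(\pi_1,\pi_2\) are the \(\phi\)--images of internal geodesics joining the marked endpoints inside each component.

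The contradiction then comes from extremality. Since \(b_t\) is the rightmost DBE endpoint, no DBE endpoint may have \(x\)--coordinate exceeding \(x_{b_t}\); but \(b_1=\phi(a_t)\) has an \emph{outgoing} DBE whose East endpoint \(b_1+(2,0)\) is a DBE endpoint strictly to the right of \(b_1\). Pushing the correspondence through \(\phi\) and \(\phi^{-1}\) produces two coordinate--monotone paths \(\pi_1,\pi_2\) running between the left endpoints \(a_t,a_2\) and the right endpoints \(b_1,b_t\); because both stay in the half--plane \(x\le x_{b_t}\) yet are forced to exchange their right endpoints, they must cross as in Figure~\ref{fig:main-lemma-crossing-paths}. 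A crossing either places particles of the two disjoint components on the same node, or exhibits a DBE endpoint that violates the rightmost/topmost maximality of \(b_t\). Either way we reach a contradiction, so \(a_tb_t\) cannot draw; hence no external DBE survives in a final configuration and the system consists of a single component.

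The step I expect to be the main obstacle is the planar crossing argument of the last paragraph: making precise that the two \(\phi\)--induced paths are coordinate--monotone, that the marking genuinely forces the ``outgoing \(\mapsto\) outgoing'' matching (so that \(\phi(a_t)\) really produces a new right--going DBE), and that a crossing is unavoidable given the extremality of \(b_t\). Care is also needed to treat the reflection (and glide) case separately from the pure translation case, since the particles have no chirality, and to invoke Lemma~\ref{lem:internal-db} to rule out a second simultaneously active DBE of \(C_a\) or \(C_b\) that could interfere with the comparison on \(a_tb_t\).
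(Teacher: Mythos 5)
Your proposal has the same skeleton as the paper's proof: choose the rightmost--then--topmost external DBE, observe that finality forces its last comparison to end in a draw, deduce identical encodings, transport the active--DBE marks through the encoding correspondence to obtain a second outgoing marked endpoint in $C_b$ and a second incoming marked endpoint in $C_a$, and derive a contradiction from the paths joining marked endpoints. However, the step you yourself flag as the main obstacle is precisely where the argument as written fails. Your ``contradiction from extremality'' is not a contradiction: rightmostness of $b_t$ only yields the one--sided bound $x_{\phi(a_t)}+2 \le x_{b_t}$, i.e.\ $\delta \le 0$ for the constant horizontal shift $\delta$ of $\phi$ (constant because East/West is common, whether $\phi$ is a translation or a horizontal--axis reflection). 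You must combine this with the dual bound coming from the incoming image, $x_{\phi^{-1}(b_t)} \le x_{b_t}$, i.e.\ $\delta \ge 0$, to conclude $\delta = 0$. This is not a contradiction but the paper's key intermediate fact: the two secondary active DBEs are themselves rightmost, sharing the $x$--coordinates of $a_t$ and $b_t$.

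From there the crossing argument needs three ingredients your sketch does not supply. First, topmostness (which you never use concretely) is what forces \emph{both} secondary endpoints to lie strictly below the line of $a_tb_t$; without this, two of the four possible endpoint configurations produce no vertical-order swap and hence no forced crossing. Second, the paths are not ``coordinate--monotone'' --- nothing makes tree paths monotone; what is true, and what the paper uses, is that corresponding steps of $\pi_1$ and $\pi_2$ have equal $x$--displacement, so once $\delta=0$ gives equal starting $x$--coordinates the two paths occupy the same $x$--coordinate at every step. Third, since grid nodes with equal $x$--coordinate have $y$--coordinates of equal parity and edges of the triangular grid never cross at interior points, a swap in the vertical order of two same--$x$--tracking paths can only occur at a \emph{shared node}; your alternative branch ``or exhibits a DBE endpoint that violates maximality'' is spurious. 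A shared node places both paths in one component, which then has two distinct active outgoing DBEs, contradicting the fact that each component activates at most one outgoing DBE (this, together with Lemma~\ref{lem:internal-db}, is also what certifies the secondary marked DBEs are external rather than ``interference''). Finally, a smaller gap: your reduction ``every comparison ends in a draw'' skips why the secondary DBEs, active at the time of the draw, are subject to the extremal choice made over the \emph{final} configuration; the paper handles this by working in the last configuration $\mathcal{F}$ in which the chosen edge was active and noting that external edges incident to the two components at that time remain external afterwards.
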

\begin{proof}
    Let us suppose that there exists at least one DBE in the final configuration which remains external after the execution of the algorithm. Out of the edges that remain external in the final configuration, take $e$ to be the rightmost topmost one. This means that $e$ connects two components of the final configuration which did not merge. Call the configuration on which $e$ was marked as \textit{active} for the last time $\mathcal{F}$ and call the components connected by $e$ in $\mathcal{F}$, $C_1$ and $C_2$. Notice that all edges that are external and incident to $C_1$ or $C_2$ in $\mathcal{F}$ are also external in the final configuration, otherwise, $\mathcal{F}$ is not the last configuration in which $e$ is \textit{active}. Out of the active external DBEs connecting $C_1$ and $C_2$ in $\mathcal{F}$ take $e = ab$ to be a rightmost DBE such that $a \in C_1$ and $b \in C_2$. The edge $ab$ and the two components are those in Figure \ref{fig:main-lemma-DBE}. In our algorithm components with different encodings merge, so the tree encodings of $C_1$ and $C_2$ (call them $T_1$ and $T_2$) must be the same. Since $a$ and $b$ are both marked as endpoints of an active DBE in $\mathcal{F}$, when $a$ (resp. $b$) is compared to its equivalent particle in $T_2$ (resp. $T_1$), the equivalent particle $a_2$ (resp. $b_1$), must also be the endpoint of an active DBE. So there must be a path $\pi_1$ in $T_1$ (resp. $\pi_2$ in $T_2$) connecting $a$ (resp. $b$) to $b_1$ (resp. $a_2$) (Figure \ref{fig:main-lemma-paths}). Due to Lemma \ref{lem:internal-db} we know that the DBEs incident to $a_2$ and $b_1$ (Figure \ref{fig:main-lemma-three-DBEs}) cannot be internal since they would not be active simultaneously with $ab$. We further know that $\pi_1$ must be equivalent to $\pi_2$ since we have assumed that the encodings of $C_1$ and $C_2$ are the same. Since the paths are equivalent, the horizontal shift between $a, b_1$ and $a_2, b$ must also be equal. We then have $x_a - x_{b_1} = x_{a_2} - x_b$ or equivalently $x_a + x_b = x_{a_2} + x_{b_1}$. Since $ab$ is a rightmost DBE, it must be that $x_{a_2} \leq x_a$ and $x_{b_1} \leq x_b$. Suppose $x_{a_2} < x_a$. Then to maintain the above equality, it must be that $x_{b_1} > x_b$ which is a contradiction, so we must have $x_{a_2} = x_a$. Similarly, we also get $x_{b_1} = x_b$. We have hence proved that $T_1$ (resp. $T_2$) is incident to two active, external, rightmost DBEs that are also connected to each other by paths in $T_1$ and $T_2$ in $\mathcal{F}$.   

    Out of all rightmost DBEs in $\mathcal{F}$, we now take $a_tb_t$ to be the topmost DBE and following the argumentation of the previous paragraph, there must exist a path from $b_t$ to a lower particle $a_i$ (i.e., $y_{a_i} < y_{b_t}$) and a path from $a_t$ to a lower particle $b_j$ (i.e., $y_{b_j} < y_{a_t}$). Take the paths starting at the $a$ particles: $a_t$ and $a_i$. Since all particles by definition agree on the left and right directions, every time the path starting at $a_t$ moves right (resp. left) the path starting at $a_i$ also moves right (resp. left) or equivalently the paths starting at rightmost $a$ particles start from the same $x$ coordinate and always maintain the same $x$ coordinates. However, the path starting at $a_t$ (resp. $a_i$) must eventually move down (resp. up) to reach $b_j$ (resp. $b_t$) that is not (resp. that is) at the topmost DBE. Since the two equal paths start from the same $x$ coordinate and $y_{b_t} > y_{b_j}$, the paths must cross (Figure \ref{fig:main-lemma-crossing-paths}). However, if the paths cross there is a path connecting $a_t$ to $a_i$ so $a_t$ and $a_i$ must be in the same tree encoding of a component. We have defined $a$ particles to be endpoints to active outgoing DBEs and we know that there is only one active outgoing DBE in each component, so this setting is not possible and $C_1$, $C_2$ must have different tree encodings. Hence a merge occurs either between $C_1$ and $C_2$ (which we have assumed not to be the case) or between one of $C_1,C_2$ and a neighbouring component. In the latter case, $e$ must become \textit{active} in some future round since the components have changed, which contradicts the assumption that $\mathcal{F}$ is the last time $e$ is active. In either case, there cannot exist a DBE which is external in the final configuration so the system eventually consists of only one component with one leader.
\end{proof}

\begin{theorem} \label{th:leader-elected}
    Starting from any number of grey components encoded as trees, after executing the above algorithm only one tree with one leader, remains.
\end{theorem}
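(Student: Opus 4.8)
The plan is to combine three ingredients: a structural invariant maintained throughout the execution, a termination argument, and Lemma~\ref{lem:final-config-DBEs-internal}, which already identifies any final configuration as a single component.

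First I would establish the invariant that, at every point of the execution, each component is encoded as a tree with exactly one leader, namely its root. The base case follows from Theorem~\ref{th:grey-leader-election-correct} together with the Tree Construction phase: every grey component starts with a unique leader and is rooted at it. The inductive step is exactly Lemma~\ref{lem:merging-two-components}: merging a losing component into a winning one along a DBE yields a single connected graph that is again a tree and retains only the root of the winner as leader. Since draws and comparisons alter neither the partition into components nor the set of leaders (by Lemma~\ref{lem:tree-comparisons-correctness} a comparison only produces a decision, with the marked labels restored afterwards), an induction on the number of merges shows the invariant holds at all times.

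Next I would argue that the execution reaches a final (quiescent) configuration in which no comparison can be initiated. Here the natural measure is the lexicographic pair $(\#\text{components}, \#\text{not--compared DBEs})$. The number of components never increases and strictly decreases at every merge, since by Lemma~\ref{lem:merging-two-components} a merge replaces two components by one; as the system starts with finitely many components, only finitely many merges can ever occur. A draw, on the other hand, marks a DBE as \emph{compared}, while a merge is the only event that turns previously \emph{compared} edges back into \emph{not--compared} ones. Hence after the last merge the number of not--compared DBEs strictly decreases at each subsequent comparison and is bounded by the total number of DBEs, so only finitely many draws remain. Because the scheduler is fair, as long as some comparison is enabled it is eventually carried out; thus the lexicographic measure strictly decreases on every comparison (via the first coordinate on a merge, via the second on a draw) until no component can select a not--compared DBE, which is precisely a final configuration. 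Applying Lemma~\ref{lem:final-config-DBEs-internal} to this configuration shows that every DBE is internal, i.e.\ the system consists of a single component, and by the invariant of the first step that component is a tree with exactly one leader, as claimed.

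The main obstacle is the termination argument, and specifically the non-monotonicity introduced by merges resetting \emph{compared} edges back to \emph{not--compared}: one cannot simply claim that the number of not--compared DBEs decreases monotonically. The lexicographic measure resolves this, because the only event that can increase the second coordinate (a merge) strictly decreases the first coordinate, which can happen only finitely often. A secondary subtlety is the asynchronous, concurrent scheduler: several comparisons and merges may proceed in parallel, so the argument must be phrased in terms of a global measure that drops on each relevant event rather than a sequential round structure, with fairness guaranteeing that enabled comparisons are eventually executed.
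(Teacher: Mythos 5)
Your proposal is correct and takes essentially the same route as the paper's proof: establish that the system stabilizes, invoke Lemma~\ref{lem:final-config-DBEs-internal} to conclude that the final configuration has no external DBEs and hence is a single component, and use Lemma~\ref{lem:merging-two-components} (together with Theorem~\ref{th:grey-leader-election-correct} and Tree Construction) to ensure every component is always a tree with exactly one leader. Your lexicographic measure $(\#\text{components}, \#\text{not--compared DBEs})$ is simply a more careful justification of the paper's one-line claim that the system ``stabilizes in a final configuration where all DBEs are compared,'' making explicit why the resetting of \emph{compared} edges to \emph{not--compared} after a merge cannot prevent termination.
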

\begin{proof}
    In each round performed by a component, either an external DBE is chosen and a merge occurs in the system or an internal edge is chosen and it becomes \textit{compared}. Consequently the system stabilizes in a final configuration where all DBEs are compared. From Lemma \ref{lem:final-config-DBEs-internal} we know that this is possible only if there is only one component. 
    Therefore, eventually the system consists of only one component with one leader.
\end{proof}

\section{Complexity Discussion}
We give an upper bound for the complexity of our algorithm. We measure the complexity in \textit{activation units}. In each \textit{activation unit}, every particle is activated at least once. Notice that what we call an \textit{activation unit} is usually called a \textit{round} in the literature, but we want to make the distinction between ``activation units'' in this section and ``rounds'' in Section \ref{sec:component-competition} clear. For the Grey Leader Election phase, we have the following. The algorithm from \cite{bazzi2019stationary} works in $O(n^2)$ \textit{activation units}, where $n$ is the number of particles in the system. The identifier calculation phase needs constant time since each of the at most six heads makes one local calculation for each head on the boundary. Then, to decide on a unique head for a grey component, a message is forwarded through at most every particle on the outer boundary of its component. It is possible that all particles of a component are its outer boundary and that all particles in the system form a single grey component. Therefore, the final phase of electing a unique head on a grey component needs $O(n)$ \textit{activation units} and Grey LE requires $O(n^2)$ \textit{activation units} in total. 

In the Tree Construction part of the algorithm, each particle in the component needs to wait until a path of particles between it and the root enter the tree and each component contains at most $n$ particles. Therefore, Tree Construction needs $O(n)$ \textit{activation units}. 

In Component Competition, an LFC procedure moves labels around a virtual ring obtained by the cyclic--DFS traversal of the tree. By construction of the virtual ring, each node of the tree corresponds to at most seven agents. Furthermore, we have already provided an intuition about why a label at distance $i$ from the agent initiating the LFC needs $2i$ \textit{activation units} to reach the initiating agent and we know that the maximum length of the ring is $6n + 1$ for the case where all particles in the system are in a common component. Therefore, LFC also needs a linear number of \textit{activation units} on the number of particles in the system. Furthermore, at most two LFCs are performed by at most two particles in each component per round. To merge two components, one path of particles in one of the components reverses its parent-child edges and since the maximum length of a single path in the tree encoding of a component is the height of the tree, merging is done in $O(n)$ \textit{activation units}. We have defined a \textit{round for a component} (i.e., the definition of round in Section \ref{sec:component-competition}) to be one iteration of Tree Traversal, Tree Comparisons and Merging Components. We know that up to $n$ components can exist in the system (i.e., each particle is a component). Furthermore, we have proved that at least one merge is performed when an external edge is selected in a round. However, between two consecutive selections of external DBEs, up to $n$ selections of internal DBEs can occur. This is because after a merge occurs all \textit{compared} edges of the merged components become \textit{not--compared} again. So at most $n-1$ rounds of external DBE activations (one for each component to merge with a neighbouring component) are needed and up to $n$ rounds are needed between two consecutive activations of external DBEs. Therefore, we need $n(n-1)$ rounds each of which needs $O(n)$ \textit{activation units}, or $O(n^3)$ \textit{activation units} in total for the Component Competition part of the algorithm.

Summing up the \textit{activation units} needed for Grey Leader Election, Tree Construction and Component Competition which are performed successively, our algorithm works in $O(n^3)$ \textit{activation units}.

We now show that this upper bound is tight for the analysis of our algorithm. Consider the system $\mathcal{S}$ of Figure \ref{fig:complexity}. $\mathcal{S}$ consists of $n$ particles: two ``big'' components $C_1$ and $C_2$ consisting of $\frac{n}{4}$ particles each and $\frac{n}{2}$ ``small'' components consisting of one particle each. Assume that all components are connected by DBEs as shown in Figure \ref{fig:complexity}. Furthermore, assume that each of $C_1$ and $C_2$ contains $O(n)$ internal DBEs and that the internal DBEs are placed in such positions that they are always chosen before the external DBE. An example of how to construct a particle system with $O(n)$ internal DBEs is given in Figure \ref{fig:linear-internal-DBEs}.

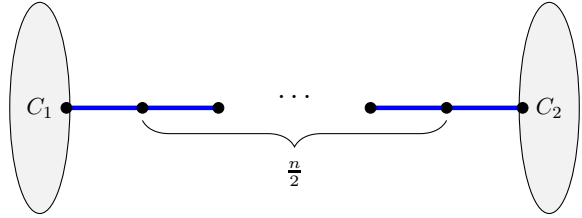
\begin{figure}[ht]
    \centering
    \scalebox{1}{
    \begin{tikzpicture}

        \node[ellipse, minimum height=3.5cm,minimum width=.8cm,draw=black,fill=gray!10,scale=0.8] (A) at (.15,0.865) {$C_1$};

        \node[ellipse, minimum height=3.5cm,minimum width=.8cm,draw=black,fill=gray!10,scale=0.8] (A) at (6.85,0.865) {$C_2$};

        \draw[line width=.6mm, blue] (60:1) --++ (0:1) --++ (0:1) ++ (0:2) --++ (0:1) --++ (0:1) ;

        \filldraw[black] (60:1) circle(2pt) ++ (0:1) circle(2pt) ++ (0:1) circle(2pt) ++ (0:1) node[above] {$\ldots$} ++ (0:1) circle(2pt) ++ (0:1) circle(2pt) ++ (0:1) circle(2pt) ;

        \draw [decorate,decoration={brace,amplitude=10pt},xshift=0pt,yshift=0pt] (5.5,0.7) -- (1.5,0.7) node [black,midway,yshift=-20pt] {\footnotesize $\frac{n}{2}$};

    \end{tikzpicture}
    }
    \caption{A particle system consisting of $\frac{n}{2} + 2$ grey components and $n$ particles.}
    \label{fig:complexity}
\end{figure}
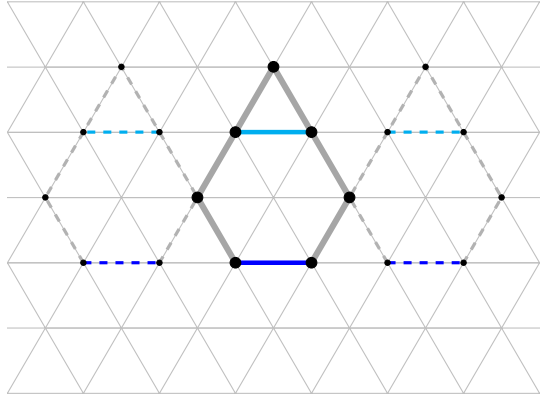
\begin{figure}[ht]
    \centering
    \scalebox{1}{
    \begin{tikzpicture}

        \triangularGrid{2}{6}{1}{2*0.4325}

        \draw[line width=.3mm, blue, dashed, very thick] (60:1) --++ (0:1);

        \draw[line width=.3mm, gray!60, dashed, very thick] (0:0) ++ (60:1) --++ (120:1) --++ (60:1) --++ (60:1) --++ (-60:1) --++ (-60:1) --++ (-120:1) ++ (-60:1);

        \draw[line width=.3mm, cyan, dashed, very thick] (60:3) --++ (180:1) ;

        \filldraw[black] (0:0) ++ (60:1) circle(1pt) ++ (120:1) circle (1pt) ++ (60:1) circle (1pt) ++ (60:1) circle (1pt) ++ (-60:1) circle (1pt) ++ (-60:1) circle(1pt) ++ (-120:1) circle (1pt) ;

        \draw[line width=.3mm, blue,dashed, very thick] (60:1) ++ (0:4) --++ (0:1);

        \draw[line width=.4mm, gray!60,dashed, very thick] (0:4) ++ (60:1) --++ (120:1) --++ (60:1) --++ (60:1) --++ (-60:1) --++ (-60:1) --++ (-120:1) ++ (-60:1);

        \draw[line width=.3mm, cyan, dashed, very thick] (60:3) ++ (0:3) --++ (0:1) ;

        \filldraw[black] (0:4) ++ (60:1) circle(1pt) ++ (120:1) circle (1pt) ++ (60:1) circle (1pt) ++ (60:1) circle (1pt) ++ (-60:1) circle (1pt) ++ (-60:1) circle(1pt) ++ (-120:1) circle (1pt) ++ (-60:1) ;

        \draw[line width=.6mm, blue] (60:1) ++ (0:2) --++ (0:1);

        \draw[line width=.8mm, gray!70] (0:2) ++ (60:1) --++ (120:1) --++ (60:1) --++ (60:1) --++ (-60:1) --++ (-60:1) --++ (-120:1) ++ (-60:1);

        \draw[line width=.6mm, cyan] (60:3) ++ (0:1) --++ (0:1);

        \filldraw[black] (0:2) ++ (60:1) circle(2pt) ++ (120:1) circle (2pt) ++ (60:1) circle (2pt) ++ (60:1) circle (2pt) ++ (-60:1) circle (2pt) ++ (-60:1) circle(2pt) ++ (-120:1) circle (2pt) ++ (-60:1) ;

    \end{tikzpicture}
    }
    \caption{In the middle we give a ``base'' particle configuration that can be repeated (as indicated by the dashed components) any number of times to give an arbitrarily large grey component with $O(n)$ internal DBEs, where $n$ is the number of particles in the grey component.}
    \label{fig:linear-internal-DBEs}
\end{figure}

From Lemma \ref{lem:tree-comparisons-correctness} we know that a comparison between two trees $T_1$ and $T_2$ needs $O(\min\{|T_1|,|T_2|\})$ activation units. Furthermore, in the case of a comparison on an internal DBE, $T_1$ and $T_2$ correspond to the same tree and as a result for $i \in \{1,2\}$, $\min\{|T_i|,|T_i|\} = |C_i| = \frac{n}{4}$ comparisons are made in $\mathcal{S}$ for each of the $O(n)$ internal DBEs before an external DBE is chosen. By construction of $\mathcal{S}$, we know that there exist $\frac{n}{2}$ external DBEs and since $|C_1| = |C_2|$, $C_1$ (resp $C_2$) merges on $\frac{n}{4}$ of those external DBEs, after performing one comparison on each, before competing with $C_2$ (resp. $C_1$). So $2*O(n)*\frac{n}{4}*\frac{n}{4}$ comparisons are necessary in $\mathcal{S}$ before $C_1$ (resp. $C_2$) competes with $C_2$ (resp. $C_1$), or equivalently $O(n^3)$ activation units.

\section{Conclusion and Open Problems}
We presented an algorithm that deterministically solves the LE problem for stationary particles that agree on one direction. We showed that the difficulties in this case are complementary to those of particles with common chirality. In the former case a unique leader can be elected by an implicit termination algorithm, whereas, in the latter case an explicit termination algorithm is possible but up to six leaders are elected. Notice that stationary particles agreeing on a common direction cannot agree on chirality even after LE, as chirality cannot be communicated along DBEs. A natural next step is determining minimal sets of properties of an initial configuration that would allow a stationary, deterministic and explicitly terminating LE algorithm that elects a unique leader. For example, in the case of particles agreeing on a common direction, if the system does not contain DBEs, Grey LE solves the problem. If only one DBE exists in the system, due to the common direction, one of the endpoints of the DBE can be elected. Can a terminating LE algorithm be found for any constant number of DBEs? An alternative, is to study the problem in systems where the number of DBEs is not constant but each grey component has at most one incoming and at most one outgoing DBE. Another assumption is to restrict to configurations where each particle is in at most one or at most two boundaries, instead of the three boundaries that we consider here. Finally, it would be interesting to consider more problems, such as Shape Formation, under the assumption of common direction.

\bibliographystyle{splncs04}
\bibliography{short_biblio}

\begin{thebibliography}{10}
\providecommand{\url}[1]{\texttt{#1}}
\providecommand{\urlprefix}{URL }
\providecommand{\doi}[1]{https://doi.org/#1}

\bibitem{angluin1980local}
Angluin, D.: Local and global properties in networks of processors (extended abstract). In: {STOC}. pp. 82--93. {ACM} (1980)

\bibitem{bazzi2019stationary}
Bazzi, R.A., Briones, J.L.: Stationary and deterministic leader election in self-organizing particle systems. In: {SSS}. Lecture Notes in Comput. Sci., vol. 11914, pp. 22--37. Springer (2019)

\bibitem{becker2014reconfiguring}
Becker, A.T., Demaine, E.D., Fekete, S.P., Habibi, G., McLurkin, J.: Reconfiguring massive particle swarms with limited, global control. In: {ALGOSENSORS}. Lecture Notes in Comput. Sci., vol.~8243, pp. 51--66. Springer (2013)

\bibitem{bourgeois2016programmable}
Bourgeois, J., Piranda, B., Naz, A., Boillot, N., Mabed, H., Dhoutaut, D., Tucci, T., Lakhlef, H.: Programmable matter as a cyber-physical conjugation. In: {SMC}. pp. 2942--2947. {IEEE} (2016)

\bibitem{briones2023asynchronous}
Briones, J.L., Chhabra, T., Daymude, J.J., Richa, A.W.: Invited paper: {A}synchronous deterministic leader election in three-dimensional programmable matter. In: {ICDCN}. pp. 38--47. {ACM} (2023)

\bibitem{das2006effective}
Das, S., Flocchini, P., Nayak, A., Santoro, N.: Effective elections for anonymous mobile agents. In: {ISAAC}. Lecture Notes in Comput. Sci., vol.~4288, pp. 732--743. Springer (2006)

\bibitem{daymude2017improved}
Daymude, J.J., Gmyr, R., Richa, A.W., Scheideler, C., Strothmann, T.: Improved leader election for self-organizing programmable matter. In: {ALGOSENSORS}. Lecture Notes in Comput. Sci., vol. 10718, pp. 127--140. Springer (2017)

\bibitem{daymude2023canonical}
Daymude, J.J., Richa, A.W., Scheideler, C.: The canonical {A}moebot model: {A}lgorithms and concurrency control. Distributed Comput.  \textbf{36}(2),  159--192 (2023)

\bibitem{AM-derakhshandeh2014amoebot}
Derakhshandeh, Z., Dolev, S., Gmyr, R., Richa, A.W., Scheideler, C., Strothmann, T.: Brief announcement: {A}moebot - a new model for programmable matter. In: {SPAA}. pp. 220--222. {ACM} (2014)

\bibitem{dufoulon2021efficient}
Dufoulon, F., Kutten, S., Moses~Jr, W.K.: Efficient deterministic leader election for programmable matter. In: {PODC}. pp. 103--113. {ACM} (2021)

\bibitem{emek2019deterministic}
Emek, Y., Kutten, S., Lavi, R., Moses~Jr, W.K.: Deterministic leader election in programmable matter. In: {ICALP}. LIPIcs Leibniz Int. Proc. Inform., vol.~132, pp. 140:1--140:14. Schloss Dagstuhl - Leibniz-Zentrum f{\"{u}}r Informatik (2019)

\bibitem{fekete2021cadbots}
Fekete, S.P., Gmyr, R., Hugo, S., Keldenich, P., Scheffer, C., Schmidt, A.: Cadbots: {A}lgorithmic aspects of manipulating programmable matter with finite automata. Algorithmica  \textbf{83}(1),  387--412 (2021)

\bibitem{feldmann2022coordinating}
Feldmann, M., Padalkin, A., Scheideler, C., Dolev, S.: Coordinating amoebots via reconfigurable circuits. J. Comput. Biol.  \textbf{29}(4),  317--343 (2022)

\bibitem{gastineau2019distributed}
Gastineau, N., Abdou, W., Mbarek, N., Togni, O.: Distributed leader election and computation of local identifiers for programmable matter. In: {ALGOSENSORS}. Lecture Notes in Comput. Sci., vol. 11410, pp. 159--179. Springer (2018)

\bibitem{gastineau2022leader}
Gastineau, N., Abdou, W., Mbarek, N., Togni, O.: Leader election and local identifiers for three-dimensional programmable matter. Concurr. Comput. Pract. Exp.  \textbf{34}(7) (2022)

\bibitem{hawkes2010programmable}
Hawkes, E., An, B., Benbernou, N.M., Tanaka, H., Kim, S., Demaine, E.D., Rus, D., Wood, R.J.: Programmable matter by folding. Proc. Natl. Acad. Sci.  \textbf{107}(28),  12441--12445 (2010)

\bibitem{itai1981symmetry}
Itai, A., Rodeh, M.: Symmetry breaking in distributed networks. vol.~88, pp. 60--87 (1990)

\bibitem{di2020shape}
Luna, G.A.D., Flocchini, P., Santoro, N., Viglietta, G., Yamauchi, Y.: Shape formation by programmable particles. Distrib. Comput.  \textbf{33}(1),  69--101 (2020)

\bibitem{savoie2018phototactic}
Savoie, W., Cannon, S., Daymude, J.J., Warkentin, R., Li, S., Richa, A.W., Randall, D., Goldman, D.I.: Phototactic supersmarticles. Artif. Life Robotics  \textbf{23}(4),  459--468 (2018)

\bibitem{toffoli1991programmable}
Toffoli, T., Margolus, N.: Programmable matter: {C}oncepts and realization. Int. J. High Speed Comput.  \textbf{5}(2),  155--170 (1993)

\bibitem{woods2013active}
Woods, D., Chen, H., Goodfriend, S., Dabby, N., Winfree, E., Yin, P.: Active self-assembly of algorithmic shapes and patterns in polylogarithmic time. In: {ITCS}. pp. 353--354. {ACM} (2013)

\end{thebibliography}

\end{document}